\newtheorem{theorem}{Theorem}
\numberwithin{theorem}{section}
\newtheorem{example}[theorem]{Example}
\newtheorem{lemma}[theorem]{Lemma}
\newtheorem{proposition}[theorem]{Proposition}
\newtheorem{claim}{Claim}
\theoremstyle{remark}
\newtheorem*{claimproof}{Proof}
\title{The smallest grammar problem revisited}
\author[H.~Bannai]{Hideo Bannai}
\author[M.~Hirayama]{Momoko Hirayama}
\author[D.~Hucke]{Danny Hucke}
\author[S.~Inenaga]{Shunsuke Inenaga}
\author[A.~Je{\.{z}}]{Artur Je{\.{z}}}
\author[M.~Lohrey]{Markus Lohrey}
\author[C.~P.~Reh]{Carl Philipp Reh}
\email{hideo.bannai@gmail.com,
hucke@eti.uni-siegen.de,
inenaga@inf.kyushu-u.ac.jp,
aje@cs.uni.wroc.pl,
lohrey@eti.uni-siegen.de,
reh@eti.uni-siegen.de}
\thanks{A short version of this paper appeared in the Proceedings of SPIRE 2016 \cite{HuLoRe17}. \\
This work has been supported by the DFG research project
LO 748/10-1 (QUANT-KOMP)}
\date{}
\newcommand{\bin}{\mathrm{bin}}
\newcommand{\mc}{\mathcal}
\newcommand{\bb}{\mathbb}
\newcommand{\val}{\mathrm{val}}
\newcommand{\NP}{\mathsf{NP}}
\newcommand{\bisection}{{\sf BISECTION}}
\newcommand{\bi}{{\sf BI}}
\newcommand{\lzse}{{\sf LZ78}}
\newcommand{\repair}{{\sf RePair}}
\newcommand{\sequitur}{{\sf SEQUITUR}}
\newcommand{\bigO}{\mc O}
\newcommand{\rsym}[1]{#1}
\begin{document}

\begin{abstract}
In a seminal paper of Charikar et al.~({\em IEEE Transactions on Information Theory}, 51(7):2554--2576, 2005) on the smallest grammar problem, the authors derive upper and lower bounds on the approximation
ratios for several grammar-based
compressors, but in all cases there is a gap between the lower and upper bound. Here the gaps for \lzse{} and \bisection{} are closed
by showing that the approximation ratio of \lzse{} is $\Theta( (n/\log n)^{2/3})$, whereas the approximation ratio of \bisection{}
is $\Theta(\sqrt{n/\log n})$.
In addition, the lower bound for \repair{} is improved from $\Omega(\sqrt{\log n})$ to $\Omega(\log n/\log\log n)$.
Finally, results of Arpe and Reischuk relating grammar-based compression for arbitrary
alphabets and binary alphabets are improved.

\smallskip
\noindent \textbf{Keywords.} string compression, smallest grammar problem, approximation algorithm, LZ78, RePair
\end{abstract}

\maketitle

\section{Introduction}

\subsection{Grammar-based compression}

The idea of grammar-based compression is based on the fact that in many cases a word $w$ can be succinctly
represented by a context-free grammar that produces exactly $w$. Such a grammar is called a {\em straight-line
program} (SLP for short) for $w$. For instance, $S \to c A A B B$, $A \to aab$, $B \to CC$, $C \to cb$ is an SLP
for the word $caabaabcbcbcbcb$.
SLPs were introduced independently by various authors in different contexts~\cite{rubin76,ZiLe78,BerstelB87,Diw86}
and under different names. For instance, in \cite{BerstelB87,Diw86} the term {\em word chains} was used since SLPs
generalize addition chains from numbers to words.
Probably the best known example of a grammar-based compressor
is the classical \lzse{}-compressor of Lempel and Ziv \cite{ZiLe78}. Indeed, it is straightforward to transform
the \lzse{}-representation of a word $w$ into an SLP for $w$. Other well-known grammar-based compressors
are \bisection{} \cite{KiefferYNC00}, \sequitur{} \cite{Nevill-ManningW97}, and \repair{} \cite{DBLP:conf/dcc/LarssonM99}, just to mention a few.

A central question asked from the very beginning in the area of grammar-based compression is
how to measure the quality of an SLP,
or, more broadly, the quality of the grammar-based compressor that computes an SLP for a given input word.
One can distinguish two main approaches for such quality measures: (i) bit-based approaches, where one analyzes the
bit length of a suitable binary encoding of an SLP and (ii) size-based approaches which measure the quality of an SLP
by its size. The size of an SLP is defined as the sum of the lengths of all right-hand sides of the SLP (the SLP from
the previous paragraph has size 12). Let us briefly survey the literature on these two approaches before we explain
our main results in Section~\ref{sec-results}.

\subsection{Bit-based approaches}
It seems that the first attempt at evaluating a gram\-mar-based compressor was done for \lzse{} by Ziv and Lempel~\cite{ZiLe78},
who developed their own methodology of comparing (finite state) compressors:
In essence, given a word $w$ define $L_s(w)$ as the length of an appropriate bit encoding of the output produced by \lzse{} with window-size $s$ on input $w$
and by $L_s^*(w)$ the smallest bit-size achievable by a finite-state compressor with $s$ states on input $w$.
It was shown that $\lim_{s \to \infty} \limsup_{n \to \infty} L_s(w)/L_s^*(w) = 1$.
In other words, \lzse{} is optimal (up to lower order terms) among finite-state compressors.
(Note that the actual statement is more general, as it allows $w$ to be compressed after some initially read prefix,
i.e., we compare how $w$ in $uw$ is compressed by \lzse{} and other compressors).

Later, a systematic evaluation of grammar-based compressors was done using the information theoretic paradigm.
In \cite{KiYa00,KiefferYNC00,KiYa02,YangK00}, grammar-based compressors have been used in order to construct universal codings in the following sense:
for every finite state source and every input string $w$ of length $n$ (that is emitted with non-zero probability by the source),
the coding length of $w$ is bounded by $-\log_2 P(w) + R(n)$, where $P(w)$ is the probability that the source emits $w$
(thus $-\log_2 P(w)$ is the self-information of $w$) and $R(n)$ is a function in $o(n)$.  The function $R(n)/n$ is called the redundancy;
it converges to zero. In \cite{KiYa00,KiefferYNC00,KiYa02,YangK00}
the code for $w$ is constructed in two steps: First, an SLP is computed for $w$ using a grammar-based compressor. In a second step this SLP is encoded
by a bit string using a suitable binary encoding (see also \cite{TabeiTS13} for the problem of encoding SLPs within the information-theoretic limit).
In \cite{KiYa00} it was shown that the redundancy can be bounded by $\bigO(\log \log n/\log n)$ provided the
grammar-based compressor produces an SLP of size $\bigO(n/\log n)$ for every input string of length $n$ (this assumes an alphabet of constant size; otherwise
an additional factor $\log \sigma$ enters the bounds). The size bound  $\bigO(n/\log n)$ holds for all grammar-based compressors that produce so-called irreducible
SLPs \cite{KiYa00}, which roughly speaking means that certain redundancies in the SLP are eliminated. Moreover, every SLP can be easily made irreducible by a simple post-processing \cite{KiYa00}.
In \cite{KiYa02}, the redundancy bound from  \cite{KiYa00}  was improved to $\bigO(1/\log n)$ for so-called structured grammar-based codes.

Recently, bounds in terms of the $k$-th order empirical entropy $H_k(w)$ of the input string $w$
have been shown for grammar-based compressors~\cite{Ganczorz18,OchoaN19}. Again, these results assume a suitable
binary encoding of SLPs.
In \cite{OchoaN19} it was shown that the length of the binary encoding (using the encoding from \cite{KiYa00}) of an irreducible
SLP for a string $w$ can be bounded by $H_k(w) \cdot |w| +  \mathcal O(n k \log \sigma/\log_\sigma n)$,
where $n$ is the length of the input string and $\sigma$ is the size of the alphabet.
Note that the additional additive term $\mathcal O(n k \log \sigma/\log_\sigma n)$
 is in $o(n \log \sigma)$ under the standard assumption that
$k = o(\log_\sigma n)$.
In \cite{Ganczorz18} similar bounds are derived for more natural binary encodings of SLPs.
On the other hand, a lower bound of $H_k(w) \cdot |w| + \Omega(n k \log \sigma/\log_\sigma n)$
was recently shown for a wide class of ``natural'' grammar-based compressors~\cite{Ganczorz19}. Hence,
the mentioned upper bounds from~\cite{Ganczorz18,OchoaN19} are tight.

\subsection{Size-based approaches}

Bit-based approaches analyze the length of the binary encoding of the SLP.
For this, one has to fix a concrete binary encoding.
In contrast, the size of the SLP (the sum of the lengths of all right-hand sides)
abstracts away from the concrete binary encoding of the SLP. Analyzing this more abstract quality measure has also some advantages:
SLPs turned out to be particularly useful for the algorithmic processing of compressed data.
For many algorithmic problems on strings, efficient algorithms are known
in the setting where the input strings are represented by SLPs, see \cite{BilleLRSSW15,Loh12survey} for some examples. For the running time of these algorithms, the size of the input SLPs is the main parameter whereas
the concrete binary encoding of the SLPs is not relevant.
Another research direction where only the SLP size is relevant
arises from the recent work on string attractors, where the size of a smallest SLP for a string is compared with other string parameters that arise from dictionary
compression (number of phrases in the {\sf LZ77} parse, minimal number of phrases in a bidirectional parse, number of runs in the Burrows-Wheeler transform)
\cite{KempaP18}.

Another important aspect when comparing the bit-based approach (in particular, entropy bounds for binary encoded SLPs)
and the size-based approach was also emphasized in \cite[Section~VI]{CLLLPPSS05}: entropy bounds are often no longer useful when low-entropy
strings are considered; see also \cite{KosarajuM99} for an investigation in the context of Lempel-Ziv compression. Consider for instance
the entropy bounds in \cite{Ganczorz18,OchoaN19}. Besides the $k$-th order empirical entropy of the input string, these bounds also contain an additive
term of order $\bigO(n k \log \sigma/\log_\sigma n)$ (and by the result from \cite{Ganczorz19} this is unavoidable).
Similar remarks apply to the redundancy bound in \cite{KiYa00},
where the output bit length of the grammar-based compressor is bounded by the self-information of the input string (with respect to a $k$-th order finite state source)
plus a term of order $\bigO(n (k+\log\log_\sigma n)/\log_\sigma n)$.
For input strings with low entropy/self-information these additive terms can be much larger than the entropy/self-information. For such input strings
the existing entropy/redundancy bounds do not make useful statements about the performance
of a grammar-based compressor.

A first investigation of the SLP size was done by Berstel and Brlek~\cite{BerstelB87},
who proved that the function $g(\sigma,n) = \max \{ g(w) \mid w \in \{1,\ldots,\sigma\}^n \}$,
where $g(w)$ is the size of a smallest SLP for the word $w$,
is in $\Theta(n/\log_\sigma n)$.
Note that $g(\sigma,n)$ measures the worst case SLP-compression
over all words of length $n$ over an alphabet of size $\sigma$.
It is worth noting that addition chains~\cite{Yao76} are basically SLPs over a singleton alphabet and
that $g(1,n)$ is the size of a smallest addition chain for $n$ (up to a constant factor).

Constructing a smallest SLP for a given input word
is known as the {\em smallest grammar problem}.
Storer and Szymanski~\cite{StorerS82} and Charikar et al.~\cite{CLLLPPSS05} proved that it cannot be solved in polynomial time unless
$\mathsf{P} = \mathsf{NP}$.
Moreover, Charikar et al.~\cite{CLLLPPSS05} showed that, unless $\mathsf{P} = \mathsf{NP}$,
one cannot compute in polynomial
time for a given word $w$ an SLP of size $<(8569/8568) \cdot g(w)$.
The construction in \cite{CLLLPPSS05}
uses an alphabet of unbounded size, and it was unknown whether this lower bound holds also for words over
a fixed alphabet.
In \cite{CLLLPPSS05} it is stated that the construction in~\cite{StorerS82} shows that the smallest
grammar problem for words over  a ternary alphabet cannot be solved in polynomial time unless  $\mathsf{P} = \mathsf{NP}$.
But this is not clear at all, see the recent paper \cite{CaFeGaGrSchmi16} for a detailed explanation. In the same paper
 \cite{CaFeGaGrSchmi16} it was shown that the smallest
grammar problem for an alphabet of size 24 cannot be solved in polynomial time unless $\mathsf{P} = \mathsf{NP}$ using a rather
complex construction. It is far from clear whether this construction can be adapted so that it works also for a binary alphabet.
Another idea for showing $\mathsf{NP}$-hardness of the smallest grammar problem for binary words is to
reduce the smallest grammar problem for unbounded alphabets to the smallest grammar problem for a
binary alphabet. This route was investigated in \cite{ArpeR06}, where the following result was shown for every constant $c$: If there is a polynomial time
grammar-based compressor that computes an SLP of size $c \cdot g(w)$ for a given binary input word $w$, then for every  $\varepsilon > 0$ there is a polynomial
time grammar-based compressor that computes an SLP of size $(24c+\varepsilon) \cdot g(w)$ for a given input word $w$ over an arbitrary alphabet.
The construction in \cite{ArpeR06} uses a quite technical block encoding of arbitrary alphabets into a binary alphabet.

A size-based quality measure for grammar-based compressors is the approximation ratio \cite{CLLLPPSS05}:
For a given grammar-based compressor $\mathcal{C}$ that computes from a given word $w$ an SLP $\mathcal{C}(w)$ for $w$
one defines the approximation ratio of $\mathcal{C}$ on $w$ as the quotient of the size of  $\mathcal{C}(w)$ and
the size $g(w)$ of a smallest SLP for $w$. The approximation ratio $\alpha_{\mathcal{C}}(n)$ is the maximal approximation
ratio of $\mathcal{C}$ among all words of length $n$ over any alphabet.
The approximation ratio is a useful measure for the worst-case performance of a grammar-based compressor, where
the worst-case over all strings of a certain length is considered.
This includes also low-entropy strings, for which the existing entropy/redundancy bounds are no longer useful as argued above.
In this context one should also emphasize the fact that the entropy/redundancy bounds from  \cite{KiYa00,OchoaN19} apply to all grammar-based compressors that produce irreducible SLPs.
As mentioned above, this property can be easily enforced by a simple post-processing of the SLP. This shows that the entropy/redundancy bounds
from \cite{KiYa00,OchoaN19} are not useful for a fine-grained comparison of grammar-based compressors. In contrast, analyzing the approximation ratio can lead to such a fine-grained comparison.

Charikar et al.~\cite{CLLLPPSS05} initiated a systematic investigation of the approximation ratio of various grammar-based compressors
(\lzse{}, \bisection{}, {\sf Sequential}, \repair{}, {\sf LongestMatch}, {\sf Greedy}). They proved lower and upper bounds for the approximation ratios
of theses compressor, but for none of them the lower and upper bounds match.
Moreover, Charikar et al.~present a linear time grammar-based compressor with an approximation
ratio of $\bigO(\log n)$. Other linear time grammar-based compressors which achieve the same approximation ratio can be found in
\cite{Jez13approx,Jez16,Ryt03,Sakamoto05}.  It is unknown whether there exist
grammar-based compressors that work in polynomial time and have an approximation ratio of $o(\log n)$. Getting a polynomial time grammar-based compressor
with an approximation ratio of $o(\log n/\log\log n)$ would solve a long-standing open problem on addition chains \cite{CLLLPPSS05,Yao76}.

\subsection{Results of the paper} \label{sec-results}

Our first main contribution (Section~\ref{sec-approx})
is an improved analysis of the approximation ratios of
\lzse{}, \bisection{}, and \repair{}.
These compression algorithms are among the most popular grammar-based compressors.
\lzse{} is a classical algorithm and the basis of several widely used text compressors such as {\sf LZW} (Lempel-Ziv-Welch).
\repair{}  shows in many applications the best compression results among the tested grammar-based compressors \cite{BilleGP17,GaIMaNaSaTa}
and found applications, among others, in web graph compression \cite{ClaudeN10},
different scenarios related to word-based text compression \cite{Wan03},
searching compressed text \cite{Kida03},
suffix array compression \cite{Gonzalez07}
and (in a slightly modified form in) XML compression \cite{LohreyMM13}.
Some variants and improvements of \repair{} can be found in \cite{BilleGP17,FuruyaTNIBK19,GaIMaNaSaTa,GanczorzJ17,MasakiK16}.
\bisection{} was first studied in the context of universal lossless compression \cite{KiefferYNC00} (called {\sf MPM} there).
On bit strings of length $2^n$, \bisection{} produces in fact the ordered binary decision
diagram (OBDD) of the Boolean function represented by the bit string; see also \cite{KiFlaYa11}.
OBDDs are a widely used data structure in the area of hardware verification.

For \lzse{} and \bisection{} we close the gaps for the approximation ratio from \cite{CLLLPPSS05}.
For this we improve the corresponding lower bounds from \cite{CLLLPPSS05} and obtain the approximation ratios
$\Theta( (n/\log n)^{1/2})$ for \bisection{} and $\Theta( (n/\log n)^{2/3})$ for \lzse{}.  We prove both lower bounds using a binary alphabet.
These are the first exact (up to constant factors) approximation ratios for practical grammar-based compressors.
We also improve the lower bound for \repair{} from $\Omega\left(\sqrt{\log n}\right)$ to $\Omega\left(\log n/\log\log n\right)$ using a binary alphabet. The previous lower bound from \cite{CLLLPPSS05} used a family of words over an alphabet of unbounded size. Our new lower bound for \repair{} is still quite far away from the best known upper bound of $\bigO( (n/\log n)^{2/3})$ \cite{CLLLPPSS05}. On the other
hand, the lower bound $\Omega\left(\log n/\log\log n\right)$ is of particular interest, since it was shown in \cite{CLLLPPSS05} that
a grammar-based compressor with an approximation ratio of $o(\log n/\log\log n)$ would improve Yao's method for computing a smallest addition
chain for a set of numbers \cite{Yao76}, which is a long standing open problem. Our new lower bound excludes \repair{} as a candidate for improving Yao's method.
Let us also remark that \repair{} belongs to the class of so-called {\em global grammar-based compressors} (other examples are {\sf LongestMatch} \cite{KiYa00} and
{\sf Greedy} \cite{ApostolicoL98}). Analyzing the approximation ratio of global algorithms seems to be very difficult. We can quote here Charikar et al.~\cite{CLLLPPSS05}:
``Because they [global algorithms] are so natural and our understanding is so incomplete, global algorithms are one of the most interesting topics related to the smallest grammar problem that deserve further investigation.''
In the specific context of singleton alphabets, a detailed investigation of the approximation ratios of global grammar-based compressors was recently examined in~\cite{Hu19}.

Our second main contribution deals with the hardness of the smallest grammar problem for words over a binary alphabet.
As mentioned above, it is open whether this problem is {\sf NP}-hard. This is one of the most intriguing unsolved problems in the area of grammar-based compression.
Recall that Arpe and Reischuk \cite{ArpeR06} used a quite technical block encoding to show that if there is a polynomial time
grammar-based compressor with approximation ratio $c$ (a constant) on binary words, then there is a polynomial
time grammar-based compressor with approximation ratio $24c+\varepsilon$ for every  $\varepsilon > 0$ on arbitrary words.
Here, we present a very simple construction, which encodes the $i$-th alphabet symbol by $a^i b$, and yields the same result
as in \cite{ArpeR06} but with $24c+\varepsilon$ replaced by $6c$. In order to show {\sf NP}-hardness of the smallest grammar problem
for binary strings, one would have to reduce the factor 6 to at most $8569/8568$.
This follows from the inapproximability result for the smallest grammar problem from \cite{CLLLPPSS05}.

\section{Straight-line Programs}

Let $w=a_1\cdots a_n$ ($a_1,\dots,a_n\in\Sigma$) be a \emph{word} over an \emph{alphabet} $\Sigma$.
The length $|w|$ of $w$ is $n$ and we denote by $\varepsilon$ the word of length $0$.  Let $\Sigma^+ = \Sigma^* \setminus \{\varepsilon\}$
be the set of nonempty words.
For $w \in \Sigma^+$, we call $v\in\Sigma^+$ a \emph{factor} of $w$ if there exist $x,y\in\Sigma^*$ such that $w=xvy$.
If $x=\varepsilon$ (respectively $y=\varepsilon$) then we call $v$ a \emph{prefix} (respectively \emph{suffix}) of $w$.
A factorization of $w$ is a decomposition $w=f_1\cdots f_\ell$ into factors $f_1,\dots, f_\ell$.
For words $w_1,\dots, w_n\in\Sigma^*$, we further denote by $\prod_{i=j}^nw_i$ the word $w_jw_{j+1}\cdots w_n$ if $j\le n$ and $\varepsilon$ otherwise.

A \emph{straight-line program}, briefly SLP, is a context-free grammar that
produces a single word $w\in\Sigma^+$.
Formally, it is a tuple $\bb A = (N,\Sigma, P, S)$, where $N$ is a
finite set of nonterminals with $N\cap \Sigma = \emptyset$,
$S \in N$ is the start nonterminal, and $P$ is a finite
set of productions (or rules) of the form $A \to w$ for $A \in N$, $w \in (N \cup \Sigma)^+$ such that:
(i) For every $A \in N$, there exists exactly one production of the form $A \to w$, and
(ii) the binary relation $\{ (A, B) \in N \times N \mid (A \to w) \in P,\;B \text{ occurs in } w \}$ is acyclic.
Every nonterminal $A \in N$ produces a unique string $\val_{\bb A}(A) \in \Sigma^+$.
The string defined by $\bb A$ is $\val(\bb A) = \val_{\bb A}(S)$.
We omit the subscript $\bb A$ when it is clear from the context.
The \emph{size} of the SLP $\bb A$ is
$|\bb A| = \sum_{(A \to w) \in P} |w|$.
We denote by $g(w)$ the size of a smallest SLP producing the word $w\in\Sigma^+$.
It is easy to see that $g(w)\le |w|$ since for each word $w$ there is a trivial SLP with the only rule $S\to w$.
We will use the following lemma which summarizes known results about SLPs.

\begin{lemma} \label{lemma:folklore}
Let $\Sigma$ be a finite alphabet of size $\sigma$.
\begin{enumerate}
\item\label{nlogn} For every word $w\in\Sigma^+$ of length $n$, there exists an SLP $\bb A$ of size $\bigO\big(\frac{n}{\log_{\sigma} n}\big)$ such that $\val(\bb A)=w$.
\item\label{logn}  For an SLP $\bb A$ and a number $n>0$,
there exists an SLP $\bb B$ of size $|\bb A|+\bigO(\log n)$ such that $\val(\bb B)=\val(\bb A)^n$.
\item\label{concat} For SLPs $\bb A_1$ and $\bb A_2$ there exists an SLP $\bb B$ of size $|\bb A_1|+|\bb A_2|$ such that $\val(\bb B)=\val(\bb A_1)\val(\bb A_2)$.
\item\label{substring}  For given words $w_1,\dots,w_n\in\Sigma^*$, $u\in\Sigma^+$ and SLPs
$\bb A_1, \bb A_2$ with $\val(\bb A_1)=u$ and $\val(\bb A_2)=
w_1 x w_2 x \cdots w_{n-1} x w_n$ for a symbol $x \not\in \Sigma$,
there exists an SLP $\bb B$ of size $|\bb A_1| + |\bb A_2|$ such that $\val(\bb B)=w_1 u w_2 u \cdots w_{n-1} u w_n$.
\item\label{mk} A string $w\in\Sigma^*$ contains at most $g(w) \cdot k$ distinct factors of length $k$.
\end{enumerate}
\end{lemma}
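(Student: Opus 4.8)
The lemma collects five folklore facts, so the plan is to prove each part separately; parts~\ref{logn}--\ref{substring} are routine surgery on SLPs, while parts~\ref{nlogn} and~\ref{mk} carry the actual content. For part~\ref{nlogn} I would use a block decomposition following Berstel--Brlek: choose a block length $k$ of order $\tfrac12\log_\sigma n$, cut $w$ into $\lceil n/k\rceil$ blocks of length at most $k$, introduce one nonterminal with a trivial right-hand side for each of the at most $\sigma^k\le\sqrt n$ distinct blocks, and let the start rule list the blocks of $w$ in order. The dictionary then costs $\bigO(\sqrt n\log n)$ and the start rule costs $\lceil n/k\rceil=\bigO(n/\log_\sigma n)$, so the total is $\bigO(n/\log_\sigma n)$; the points needing care are the tuning of $k$ (so that $\sigma^k\cdot k=o(n/\log_\sigma n)$) and the degenerate regime of very small $n$, where the trivial SLP $S\to w$ already suffices.

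For part~\ref{logn} I would mimic fast exponentiation: from the binary expansion of $n$ one reads off an addition chain of length $\bigO(\log n)$, which I realize by $\bigO(\log n)$ fresh nonterminals, each with a rule of size $2$ of the form $T\to T_1T_2$, where the starting nonterminal of the chain is the start nonterminal of $\bb A$; the new rules add only $\bigO(\log n)$ to the size. For parts~\ref{concat} and~\ref{substring} the key preliminary observation is that in any SLP the start nonterminal may be assumed not to occur on any right-hand side (by acyclicity it cannot occur on a right-hand side of a nonterminal reachable from the start symbol, and unreachable rules can be discarded without increasing the size). Given this, for part~\ref{concat} I rename the nonterminals of $\bb A_1$ and $\bb A_2$ apart, drop the two start rules $S_i\to\alpha_i$ (and the now unused $S_1,S_2$), and add a fresh start rule $S\to\alpha_1\alpha_2$; the size is exactly $|\bb A_1|+|\bb A_2|$ and acyclicity is clear since the two rule sets were independent. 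For part~\ref{substring} I note that $x\notin\Sigma$ is a terminal that does not occur in $\bb A_1$, form the disjoint union of the two rule sets, and replace in every right-hand side of $\bb A_2$ each occurrence of the terminal $x$ by the start nonterminal of $\bb A_1$; this is a one-for-one replacement, so the size stays $|\bb A_1|+|\bb A_2|$, and the value of the (unchanged) start nonterminal of $\bb A_2$ is now obtained from $\val(\bb A_2)$ by substituting $u=\val(\bb A_1)$ for each $x$, as required.

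For part~\ref{mk} I would argue via the derivation tree of a smallest SLP $\bb A$ for $w$ (so $|\bb A|=g(w)$), where we may assume that every symbol and nonterminal occurring in $w$ appears on some right-hand side. The case $k=1$ is immediate, since the distinct length-$1$ factors are exactly the symbols of $w$, of which there are at most $g(w)$. For $k\ge 2$ call a pair consisting of a rule $A\to\alpha_1\cdots\alpha_m$ together with an index $i<m$ a \emph{boundary}; there are $\sum_A(|\alpha_A|-1)=|\bb A|-|N|<g(w)$ of them. Given a length-$k$ factor $v$ of $w$, push an occurrence of it down the derivation tree: at a node with rule $A\to\alpha_1\cdots\alpha_m$ the occurrence either sits entirely inside the subtree of a single child (descend there; impossible at a terminal leaf because $k\ge 2$) or it straddles some boundary $i$ of this rule. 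Since the tree is finite, every $v$ straddles some boundary, and for a fixed boundary the factor is determined by how many of its $k$ symbols lie to the left of the boundary, so at most $k-1$ distinct length-$k$ factors straddle it; moreover the map sending $v$ to such a (boundary, left-overlap) pair is injective because the pair recovers $v$. Hence $w$ has at most $(k-1)(g(w)-1)<k\cdot g(w)$ distinct length-$k$ factors.

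The main obstacle is the charging argument in part~\ref{mk}: one must argue cleanly that the push-down terminates before reaching a terminal leaf and that the resulting assignment of distinct factors to (boundary, overlap) pairs is injective. Everything else is bookkeeping, the only other delicate point being the choice of the block length in part~\ref{nlogn}.
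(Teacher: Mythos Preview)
Your proposal is correct and follows the standard arguments: the paper itself does not prove parts~\ref{nlogn}, \ref{logn}, \ref{concat}, \ref{mk} but simply cites Berstel--Brlek for part~\ref{nlogn} and Charikar et al.\ for parts~\ref{logn}, \ref{concat}, \ref{mk}, and your sketches reproduce exactly those classical proofs (block decomposition, repeated squaring, start-rule merging, and the boundary-counting argument in the derivation tree). For part~\ref{substring} your construction is verbatim the one-line proof the paper gives.
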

Statement~\ref{nlogn} can be found for instance in \cite{BerstelB87}.
Statements~\ref{logn}, \ref{concat} and \ref{mk} are shown in \cite{CLLLPPSS05}.
The proof of~\ref{substring} is straightforward: Simply replace in the SLP $\bb A_2$ every occurrence of the terminal $x$ by
the start nonterminal of $\bb A_1$ and add all rules of $\bb A_1$ to $\bb A_2$.

The maximal size of a smallest SLP for all words of length $n$ over an alphabet of size $k$ is
\[
g(k,n)=\max \{ g(w) \mid w\in [1,k]^n \},
\]
where $[1,k] = \{1, \ldots, k \}$. By point~\ref{nlogn} of Lemma~\ref{lemma:folklore} we have $g(k,n) \in \bigO(n/\log_k n)$.
In fact, Berstel and Brlek proved in \cite{BerstelB87} that $g(k,n) \in \Theta(n/\log_k n)$.
As a first minor result, we show that there are words of length $2k^2+2k+1$ over an alphabet of size $k$ for which the size of a smallest SLP equals the word length. Additionally, we show that all longer words have strictly smaller SLPs. Together this yield the following proposition:
\begin{proposition} \label{prop-incompr}
Let $n_k=2k^2+2k+1$ for $k>0$. Then
(i) $g(k,n)<n$ for $n>n_k$ and (ii) $g(k,n)=n$ for $n\le n_k$.
\end{proposition}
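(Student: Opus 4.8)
The plan is to prove the two bounds separately: part~(i) by a pigeonhole count on the length‑$2$ factors of an arbitrary long word, and part~(ii) by constructing a single incompressible word of length $n_k$ and then passing to its prefixes.

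For part~(i), let $w$ be any word of length $n>n_k$ over an alphabet of size $k$; then $w$ has $n-1\ge 2k^2+2k+1$ occurrences of length‑$2$ factors, one at each position $1,\dots,n-1$. Call a length‑$2$ factor \emph{mixed} if its two letters differ and \emph{square} otherwise. Two occurrences of the same mixed factor are always at distance $\ge 2$, hence non‑overlapping, and there are only $k(k-1)$ mixed factors, so if no mixed factor occurs three times then at most $2k(k-1)=2k^2-2k$ positions carry a mixed factor, leaving at least $4k+1$ positions $i$ with $w_i=w_{i+1}$; by pigeonhole some letter $a$ accounts for at least five of them, and among five integers one can always pick three pairwise at distance $\ge 2$, i.e.\ three pairwise non‑overlapping occurrences of $aa$. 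Thus in every case $w$ has a length‑$2$ factor $u$ with three pairwise non‑overlapping occurrences; writing $w=y_0\,u\,y_1\,u\,y_2\,u\,y_3$ and using a fresh nonterminal with rules $S\to y_0 A y_1 A y_2 A y_3$ and $A\to u$ yields an SLP for $w$ of size $(n-6+3)+2=n-1<n$, so $g(k,n)<n$.

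For part~(ii), first note that $g(w_1w_2)\le g(w_1)+g(w_2)$ (Lemma~\ref{lemma:folklore}(\ref{concat})) implies that if $g(w)=|w|$ then $g(v)=|v|$ for every prefix $v$ of $w$; hence it suffices to produce one word $w_k\in[k]^{n_k}$ with $g(w_k)=n_k$. I would take $w_k$ to be a ``bunched Eulerian word'': start from the complete directed multigraph on $[k]$ with two parallel edges for each ordered pair of distinct vertices (it is balanced and connected), pick an Eulerian circuit $W_0$ of it (a word of length $2k^2-2k+1$), and then, for each letter $a$, replace one occurrence of $a$ in $W_0$ by the run $a^5$, accounting for the four remaining ``loop'' edges at $a$. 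The result has length $2k^2-2k+1+4k=n_k$; for each $a$ it contains exactly one maximal run $a^5$ and no other two consecutive equal letters, so every square factor $aa$ occurs exactly four times but with only two pairwise disjoint occurrences, and every mixed factor occurs at most twice. Moreover $W_0$ should be chosen so that the transition used at each vertex is ``simple'', which forces every length‑$3$ factor of $W_0$, and then of $w_k$, to occur at most once; in particular $w_k$ has no factor of length $\ge 3$ with two disjoint occurrences. To finish, let $\bb A=(N,\Sigma,P,S)$ be a minimal SLP for $w_k$; it may be assumed reduced (every nonterminal reachable from $S$, every right‑hand side of length $\ge 2$, and, by minimality, distinct nonterminals with distinct values). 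Let $m_A$ be the number of nodes labelled $A$ in the derivation tree of $w_k$; counting children gives $n_k-1=\sum_{A\in N}m_A(|\alpha_A|-1)$ (where $A\to\alpha_A$), hence $|\bb A|=\sum_A|\alpha_A|=n_k-1+|N|-\sum_{A\in N}(m_A-1)(|\alpha_A|-1)$. Distinct $A$‑labelled nodes have disjoint leaf‑spans, so $\val(A)$ occurs $m_A$ times pairwise disjointly in $w_k$; if $|\val(A)|\ge 3$ then a length‑$3$ factor of $\val(A)$ would have $m_A\ge 2$ disjoint occurrences in $w_k$, which is impossible, so every $A$ with $m_A\ge 2$ has $|\val(A)|=2$, its rule consists of two terminals, and $m_A\le 2$ by the occurrence bounds on $w_k$. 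Therefore $\sum_{A}(m_A-1)(|\alpha_A|-1)=|\{A:m_A\ge 2\}|\le|N|-1$ (as $m_S=1$), so $|\bb A|\ge n_k$, and with $g(w_k)\le n_k$ we get $g(w_k)=n_k$.

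The main obstacle is the construction in part~(ii) together with the verification of its anti‑repetition properties: choosing the Eulerian circuit $W_0$ with a simple transition system at every vertex (possible for $k\ge 3$, where there is enough out‑degree to route the two incoming copies of each edge to distinct continuations; the cases $k\le 2$ must be checked by hand, where the only doubly‑occurring length‑$3$ factor is self‑overlapping) and ensuring that the chosen transitions produce a single circuit rather than several. Everything after that — the identity $|\bb A|=n_k-1+|N|-\sum_A(m_A-1)(|\alpha_A|-1)$ and the deduction that the scarcity of repeated factors in $w_k$ forces every reused nonterminal to have a length‑$2$ value used exactly twice — is then routine, modulo the standard justification of the normalizations (reducedness, and minimality implying distinct values).
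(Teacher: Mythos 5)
Your proposal is essentially correct and the overall skeleton (count length-$2$ factors, then exhibit an extremal word of length $n_k$ avoiding the repetition patterns $M_{3,2}$ and $M_{2,3}$) matches the paper, but two pieces are genuinely different. For part~(i) you phrase the count contrapositively (pigeonhole forces a length-$2$ factor with three pairwise disjoint occurrences, then compress explicitly to size $n-1$), whereas the paper argues from ``$w\notin M_{3,2}\cup M_{2,3}$ implies $|w|\le n_k$''; arithmetically these are the same and both are fine. For part~(ii) you deviate more substantially. The paper writes down a closed-form word $w_k=\bigl(\prod_{i=1}^{k}a_{k-i+1}^5\bigr)\prod_{i=1}^{k-1}\bigl(\prod_{j=i+2}^{k}(a_ja_i)^2\bigr)a_{i+1}a_ia_{i+1}$ and simply asserts (``easy to see'') that $g(w)<|w|$ iff $w\in M_{3,2}\cup M_{2,3}$, whereas you build $w_k$ from an Eulerian circuit in the doubled complete digraph with bunched $a^5$ runs, and you actually \emph{prove} the hard direction of that assertion via the derivation-tree identity $|\bb A|=n_k-1+|N|-\sum_A(m_A-1)(|\alpha_A|-1)$ and the observation that acyclicity makes equal-labelled tree nodes span-disjoint. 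That derivation-tree argument is correct and is a genuine addition: it fills in the ``easy to see'' step with a clean quantitative proof, and it is the kind of justification a careful referee would want. Your reduction to a single length-$n_k$ word via the concatenation lemma is also a tidy way to handle $n<n_k$.

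The one soft spot is exactly the one you flag yourself: the existence of an Eulerian circuit whose transition system at every vertex sends the two parallel copies of each in-edge to distinct out-neighbours, \emph{and} forms a single circuit rather than several. This is plausible for $k\ge3$ (a bipartite-matching / rotation-system argument should work), but it is not proved in your write-up, and your stated intermediate claim ``every length-$3$ factor of $w_k$ occurs at most once'' is slightly too strong as phrased (e.g.\ $aaa$ occurs three times inside $a^5$, just never disjointly); what you actually need and use is the weaker and correct statement that no length-$3$ factor has two pairwise disjoint occurrences. The paper's explicit formula for $w_k$ sidesteps both issues: one can verify $w_k\notin M_{3,2}\cup M_{2,3}$ by direct inspection of the formula, with no existential graph-theoretic lemma to discharge. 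If you keep the Eulerian route, you should either give the transition system concretely or cite a BEST-theorem-style argument for its existence; otherwise, swapping in the paper's explicit word and keeping your derivation-tree lower bound gives a fully self-contained proof that is arguably cleaner than either original.
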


\begin{proof}
Let $\Sigma_k=\{a_1,\ldots,a_k\}$ and
let $M_{n,\ell} \subseteq \Sigma_k^*$ be the set of all words $w$ where a factor $v$ of length $\ell$ occurs at least $n$ times without overlap.
It is easy to see that $g(w) < |w|$ if and only if $w\in M_{3,2}\cup M_{2,3}$.
Hence, we have to show that every word $w \notin M_{3,2}\cup M_{2,3}$ has length at most $2k^2+2k+1$.
Moreover, we present words $w_k \in {\Sigma_k}^*$ of length $2k^2+2k+1$ such that $w_k\notin M_{3,2}\cup M_{2,3}$.

Let $w \notin M_{3,2}\cup M_{2,3}$. Consider a factor $a_ia_j$ of length two. If $i \neq j$ then this factor does
not overlap itself, and thus $a_ia_j$ occurs at most twice in $w$. Now consider $a_i a_i$.
Then $w$ contains at most four (possibly overlapping) occurrence of $a_i a_i$, because
five occurrences of $a_ia_i$ would yield at least three non-overlapping occurrences of $a_ia_i$.
It follows that $w$ has at most $2(k^2-k)+4k$ positions where a factor of length $2$ starts, which implies
$|w| \leq 2k^2+2k+1$.

Now we create a word $w_k\notin M_{3,2}\cup M_{2,3}$ which realizes the above maximal occurrences of factors of length $2$:
\[
w_k=\left(\prod_{i=1}^{k} a_{k-i+1}^5\right)\prod_{i=1}^{k-1}\left(\prod_{j=i+2}^{k}\left(a_ja_i\right)^2\right)a_{i+1}a_ia_{i+1}
\]
For example we have $w_3= a_3^5 a_2^5 a_1^5 (a_3a_1)^2 a_2a_1a_2a_3a_2a_3$.
One can check that $|w_k|=2k^2+2k+1$ and $w_k \notin M_{3,2}\cup M_{2,3}$.
\end{proof}

\section{Approximation ratio} \label{sec-approx}

As mentioned in the introduction, there is no polynomial time algorithm that computes
a smallest SLP for a given word, unless $\mathsf{P} = \NP$ \cite{CLLLPPSS05,StorerS82}.
This result motivates approximation algorithms which are called \emph{grammar-based compressors}.
A grammar-based compressor $\mc C$ computes for a word $w$ an SLP $\mc C(w)$ such that $\val(\mc C(w))=w$.
The \emph{approximation ratio} $\alpha_{\mc C}(w)$ of $\mc C$ for an input $w$ is defined as $|\mc C(w)|/g(w)$.
The worst-case approximation ratio $\alpha_{\mc C}(k,n)$ of $\mc C$ is the maximal approximation ratio over
all words of length $n$ over an alphabet of size $k$:
\[\alpha_{\mc C}(k,n)=\max \{ \alpha_{\mc C}(w) \mid w \in [1,k]^n \} = \max\{ |\mc C(w)|/g(w) \mid w \in [1,k]^n \} \]
In this definition, $k$ might depend on $n$. Of course we must have $k \leq n$ and we write $\alpha_{\mc C}(n)$ instead of $\alpha_{\mc C}(n,n)$.
This corresponds to the case where there is no restriction on the alphabet at all and it is the
definition of the worst-case approximation ratio in \cite{CLLLPPSS05}.
The grammar-based compressors studied in our work are \bisection{}~\cite{KiefferYNC00}, \lzse{}~\cite{ZiLe78} and \repair{} \cite{DBLP:conf/dcc/LarssonM99}.
We will abbreviate the approximation ratio of \bisection{} by $\alpha_\bi{}$.
The families of words which we will use to improve the lower bounds of $\alpha_{\bi{}}(n)$ and $\alpha_{\lzse{}}(n)$ are inspired by the
constructions in \cite{CLLLPPSS05}.

\subsection{\bisection{}}

The \bisection{} algorithm~\cite{KiefferYNC00} first splits an input word $w$ with $|w| \geq 2$ as
$w = w_1 w_2$ such that
$|w_1|=2^j$ for the unique number $j \geq 0$ with $2^j<|w|\le 2^{j+1}$.
This process is recursively repeated with $w_1$ and $w_2$ until we obtain words of length $1$.
During the process, we introduce a nonterminal for each distinct factor of length at least two and create a rule with two symbols on the right-hand
 side corresponding to the split.
 Note that if $w = u_1 u_2 \cdots u_k$ with $|u_i| = 2^n$ for all $i,1 \leq i \leq k$, then the SLP produced by
 \bisection{} contains a nonterminal for each distinct word $u_i$ ($1\le i\le k$).

\begin{example}\label{example-bisection}
\bisection{} constructs an SLP for $w=ababbbaabbaaab$ as follows:
\begin{itemize}
\item $w = w_1w_2$ with $w_1=ababbbaa$, $w_2=bbaaab$\newline Introduced rule: $S\to W_1W_2$
\item $w_1 = x_1 x_2$ with $x_1=abab$, $x_2=bbaa$, and $w_2 = x_2 x_3$ with
$x_3=ab$\newline Introduced rules: $W_1\to X_1X_2,\;W_2\to X_2X_3$, $X_3 \to ab$
\item $x_1 = x_3x_3$, $x_2 = y_1 y_2$ with $y_1=bb$ and $y_2=aa$ \newline Introduced rules:
$X_1\to X_3X_3,\; X_2\to Y_1Y_2,\;Y_1\to bb,\;Y_2\to aa$
\end{itemize}
\end{example}
\bisection{} performs asymptotically optimal on unary words $a^n$ since it produces an SLP of
size $\bigO(\log n)$.
Therefore $\alpha_\bi{}(1,n)\in\Theta(1)$.
The following bounds on the approximation ratio for alphabets of size at least two are proven in \cite[Thm.~5 and 6]{CLLLPPSS05}:
\begin{eqnarray}
\alpha_\bi{}(2,n) & \in & \Omega(\sqrt{n}/\log n) \label{bisec-lower} \\
\alpha_\bi{}(n) & \in & \bigO(\sqrt{n/\log n}) \label{bisec-upper}
\end{eqnarray}
We improve the lower bound \eqref{bisec-lower} so that it matches the upper bound
 \eqref{bisec-upper}:

\begin{theorem} \label{bisection-lower-bound}
For every $k,2 \leq k \leq n$ we have
$\alpha_\bi{}(k,n) \in \Theta(\sqrt{n/\log n})$.
\end{theorem}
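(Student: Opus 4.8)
The plan is to obtain the upper bound $\alpha_\bi(k,n)\in\bigO(\sqrt{n/\log n})$ for free: a binary word is in particular a word over the alphabet $[1,k]$, so $\alpha_\bi(k,n)\le\alpha_\bi(n,n)=\alpha_\bi(n)\in\bigO(\sqrt{n/\log n})$ by \eqref{bisec-upper}. The entire work therefore lies in the lower bound, and by the same monotonicity it suffices to prove $\alpha_\bi(2,n)\in\Omega(\sqrt{n/\log n})$; that is, to exhibit binary words $w$ of length $\Theta(n)$ with $g(w)=\bigO(\log n)$ but $|\bi(w)|=\Omega(\sqrt{n\log n})$.

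The construction I would use is a ``two-dimensional staircase''. Fix $L=2^\ell$, choose $q=\Theta(\log L)$ and $I=\Theta(L)$ as powers of two with $I+q<L$, so that $N:=ILq$ is a power of two, and set
\[
w=\prod_{i=0}^{I-1}\prod_{m=q-1}^{0}\beta_{i,m},\qquad \beta_{i,m}=0^{i}\,1\,0^{m}\,1\,0^{L-i-m-2}.
\]
Each $\beta_{i,m}$ has length exactly $L$ and has its two $1$'s at positions $i$ and $i+m+1$, so the map $(i,m)\mapsto\beta_{i,m}$ is injective and there are $Iq=\Theta(L\log L)$ pairwise distinct blocks. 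Since $|w|$ is a power of two and $w$ is the concatenation of $Iq$ blocks of length $2^\ell$, \bisection{}'s dyadic recursion cuts $w$ at level $\ell$ precisely into these blocks, and by the remark in the excerpt it introduces a distinct nonterminal for every distinct one; hence $|\bi(w)|\ge Iq=\Theta(L\log L)$.

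The heart of the argument — and the step I expect to be the real obstacle — is to show $g(w)=\bigO(\log L)$; here the shape of $\beta_{i,m}$ and, crucially, the \emph{decreasing} order of the inner loop on $m$ are what make it work. Telescoping the inner concatenation (the suffix $0^{L-i-m-2}$ of $\beta_{i,m}$ merges with the prefix $0^{i}$ of $\beta_{i,m-1}$ into the $i$-independent run $0^{L-m-2}$) collapses $\prod_{m=q-1}^{0}\beta_{i,m}$ to $0^{i}\,D\,0^{L-i-2}$ for the single word
\[
D=(1\,0^{q-1}\,1)\,0^{L-q-1}\,(1\,0^{q-2}\,1)\,0^{L-q}\cdots(1\,0^{1}\,1)\,0^{L-3}\,(1\,0^{0}\,1),
\]
and telescoping the outer concatenation then collapses $w$ to $D\,(0^{L-1}D)^{I-1}\,0^{L-I-1}$. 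An SLP for $w$ is assembled from: a size-$\bigO(\log L)$ SLP for $0^{L-q-1}$, from which the remaining (hence \emph{increasing}) runs $0^{L-q},\dots,0^{L-3}$ occurring in $D$ are obtained by single-letter extensions; $\bigO(q)$ rules producing the words $1\,0^{j}\,1$; one rule of size $\bigO(q)$ spelling out $D$; and standard $\bigO(\log L)$-size rules for $0^{L-1}$, $0^{L-I-1}$ and for the $(I-1)$-st power. The total size is $\bigO(q+\log L)=\bigO(\log L)$, so $\alpha_\bi(w)=|\bi(w)|/g(w)=\Omega(L\log L/\log L)=\Omega(L)=\Omega(\sqrt{N/\log N})$, using $N=\Theta(L^2\log L)$ and $\log N=\Theta(\log L)$. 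Note that $q=\Theta(\log L)$ is exactly the balancing choice: a smaller $q$ recovers only the Charikar et al.\ bound $\Omega(\sqrt n/\log n)$, while a larger $q$ makes $w$ too long relative to the number of distinct blocks.

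Finally, for an arbitrary length $n$ I would pick the largest admissible $N\le n$ (the admissible values of $N$ are within a constant factor of one another) and pad, taking $w'=w\,0^{n-N}$. Since $N$ is a power of two with $N\le n<4N$, \bisection{} first splits off a power-of-two-length prefix having $w$ as a prefix, and hence performs the full \bisection{} computation on $w$ as a sub-computation, whence $|\bi(w')|\ge Iq$; and $g(w')\le g(w)+\bigO(\log n)=\bigO(\log n)$. Rewriting $Iq=\Theta(\sqrt{n\log n})$ gives $\alpha_\bi(2,n)=\Omega(\sqrt{n/\log n})$, which together with the upper bound yields $\alpha_\bi(k,n)\in\Theta(\sqrt{n/\log n})$ for all $2\le k\le n$.
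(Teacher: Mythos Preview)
Your argument is correct, and it reaches the same tight bound as the paper, but by a genuinely different construction. The paper works first over a \emph{ternary} alphabet: it concatenates the padded binary codes $\bin_k(0),\dots,\bin_k(k-1)$ separated by blocks of a third letter $a$, then shifts this word against itself so that the level-$2^{k-\lceil\log k\rceil}$ cuts of \bisection{} see $\Theta(2^k)$ distinct blocks of the form $a^i\bin_k(j)a^{m_k-i}$; the $\bigO(k)$ upper bound on $g(s_k)$ comes not from an explicit SLP but from invoking the universal bound $g(w)\in\bigO(|w|/\log_\sigma |w|)$ (Lemma~\ref{lemma:folklore}\ref{nlogn}) on the short word $u_k'$. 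Only afterwards is everything pushed to a binary alphabet via the length-doubling morphism $0\mapsto 00$, $1\mapsto 01$, $a\mapsto 10$.

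Your construction instead is binary from the start: each block carries its ``address'' $(i,m)$ as the two $1$-positions $i$ and $i+m+1$ inside a run of $0$'s, and the key $g(w)\in\bigO(\log L)$ bound is obtained by an explicit telescoping that collapses the double product to $D(0^{L-1}D)^{I-1}0^{L-I-1}$ with $|D|$ controllable by a chain of single-letter extensions. This is more hands-on but also more self-contained: you never appeal to Lemma~\ref{lemma:folklore}\ref{nlogn}, and you avoid the ternary detour entirely. Conversely, the paper's route is shorter to write because it outsources the small-grammar step to a general lemma. Your padding step for arbitrary $n$ is also more explicit than in the paper (which, strictly speaking, only exhibits the bound along the sequence $|s_k|$); your observation that any power-of-two prefix of $w'$ is eventually isolated by \bisection{} is the right way to justify it.
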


\begin{proof}
The upper bound \eqref{bisec-upper} implies that
$\alpha_\bi{}(k,n) \in \bigO(\sqrt{n/\log n})$ for all $k,2 \leq k \leq n$. So it suffices to show
$\alpha_\bi{}(2,n) \in \Omega(\sqrt{n/\log n})$.
We first show that $\alpha_\bi{}(3,n) \in \Omega(\sqrt{n/\log n})$.
In a second step, we encode a ternary alphabet into a binary alphabet while
preserving the approximation ratio.

For every $k\geq 2$ let $\bin_k:\{0,1,\dots,k-1\}\to\{0,1\}^{\lceil \log_2 k\rceil}$
be the function where $\bin_k(j)$ ($0\le j\le k-1$) is the binary representation of $j$ padded with leading zeros (e.g. $\bin_{9}(3)=0011$).
We further define for every $k \geq 2$ the word
\[u_k=\left(\prod_{j=0}^{k-2}\bin_k(j)a^{m_k}\right)\bin_k(k-1),\]
where $m_k=2^{k-\lceil \log_2 k\rceil}-\lceil \log_2 k\rceil$.
For instance $k=4$ leads to $m_k = 2$ and $u_4=00aa01aa10aa11$.
We analyze the approximation ratio $\alpha_\bi{}(s_k)$ for the word
\[s_k=\left(u_ka^{m_k+1}\right)^{m_k}u_k.\]
\begin{claim} \label{claim-bisec1}
  The SLP produced by \bisection{} on input $s_{k}$ has size $\Omega(2^k)$.
\end{claim}
\begin{claimproof}
If $s_k$ is split into non-overlapping factors of length $m_k+\lceil \log_2 k\rceil=2^{k-\lceil \log_2 k\rceil}$, then the resulting set $F_k$ of factors is
\[F_k=\{a^i\bin_k(j)a^{m_k-i}\mid 0\le j\le k-1,\;0\le i\le m_k\}.\]
For example $s_4$ consecutively consists of the factors $00aa$, $01aa$, $10aa$, $11aa$, $a00a$, $a01a$, $a10a$, $a11a$, $aa00$, $aa01$, $aa10$ and $aa11$.
The size of $F_k$ is $(m_k+1)\cdot k\in\Theta(2^k)$, because all factors are pairwise different and $m_k\in\Theta(2^k/k)$. It follows that the SLP produced by \bisection{} on input $s_k$ has size $\Omega(2^k)$, because the length of each factor in $F_k$ is a power of two and thus \bisection{} creates a nonterminal for each distinct factor in $F_k$.
\hspace*{\fill}  \mbox{({\em end proof of Claim~\ref{claim-bisec1}})}
\end{claimproof}
\begin{claim} \label{claim-bisec2}
  A smallest SLP producing $s_{k}$ has size $\bigO(k)$.
\end{claim}
\begin{claimproof}
There is an SLP of size $\bigO(\log m_k)=\bigO(k)$ for the word $a^{m_k}$ by Lemma~\ref{lemma:folklore} (point~\ref{logn}).
This yields an SLP for $u_k$ of size $\bigO(k)+g(u_k')$ by Lemma~\ref{lemma:folklore} (point~\ref{substring}), where $u_k'=(\prod_{i=0}^{k-2}\bin_k(i)x)\bin_k(k-1)$ is obtained from $u_k$ by replacing all occurrences of $a^{m_k}$ by a fresh symbol $x$.
The word $u_k'$ has length $\Theta(k\log k)$. Applying point~\ref{nlogn} of Lemma~\ref{lemma:folklore} (note that $u'_k$ is a word over a ternary alphabet) it follows that
\[g(u_k')\in \bigO\left(\frac{k\log k}{\log(k\log k)}\right)=\bigO\left(\frac{k\log k}{\log k+\log\log k}\right)=\bigO(k).\]
Hence $g(u_k)\in \bigO(k)$. Finally, the SLP of size $\bigO(k)$ for $u_k$ yields an SLP of size $\bigO(k)$ for $s_k$ again using Lemma~\ref{lemma:folklore} (points~\ref{logn} and~\ref{concat}).
\hspace*{\fill}  \mbox{({\em end proof of Claim~\ref{claim-bisec2}})}
\end{claimproof}
In conclusion: We showed that a smallest SLP for $s_k$ has size $\bigO(k)$, while
\bisection{} produces an SLP of size $\Omega(2^k)$.
This implies $\alpha_\bi{}(s_k) \in \Omega(2^k/k)$.
Let $n = |s_k|$. Since $s_k$ is the concatenation of $\Theta(2^k)$ factors of length $\Theta(2^k/k)$, we have
$n \in \Theta(2^{2k}/k)$ and thus $\sqrt{n}\in\Theta(2^{k}/\sqrt{k})$. This yields $\alpha_\bi{}(s_k)\in\Omega(\sqrt{n/k})$.
Together with $k\in\Theta(\log n)$ we obtain $\alpha_\bi{}(3,n) \in \Omega(\sqrt{n/\log n})$.

Let us now encode words over $\{0,1,a\}$ into words over $\{0,1\}$.
Consider the homomorphism $f:\{0,1,a\}^*\to\{0,1\}^*$ with $f(0)=00$, $f(1)=01$ and $f(a)=10$.
Then we can prove the same approximation ratio of \bisection{} for the input $f(s_k)\in\{0,1\}^*$ that we proved for $s_k$ above:
The size of a smallest SLP for $f(s_k)$ is at most twice as large as the size of a smallest SLP for $s_k$, because an SLP for $s_k$
can be transformed into an SLP for $f(s_k)$ by replacing every occurrence of a symbol $x \in \{0,1,a\}$ by $f(x)$.
Moreover, if we split $f(s_k)$ into non-overlapping factors of twice the length as we considered for $s_k$, then
we obtain the factors from $f(F_k)$, whose length is again a power of two. Since $f$ is injective, we have
$|f(F_k)| = |F_k| \in \Theta(2^k)$.
\end{proof}

\subsection{\lzse{}}

The \lzse{} algorithm on input $w\in\Sigma^+$ implicitly creates a list of words $f_{1},\dots,f_{\ell}$ (which we call the \emph{\lzse{}-factorization}) with $w=f_{1}\cdots f_{\ell}$ such that the following properties hold, where we set $f_0 = \varepsilon$:
\begin{itemize}
\item $f_{i}\neq f_{j}$ for all $i,j, 0\le i,j\le \ell-1$, with $i\neq j$.
\item For all $i,1\le i \le \ell-1$, there exist $j,0 \leq j<i$ and $a \in \Sigma$ such that $f_{i}=f_{j}a$.
\item $f_{\ell}=f_{i}$ for some $0 \leq i\leq \ell-1$.
\end{itemize}
Note that the \lzse{}-factorization is unique for each word $w$.
To compute it, the $\lzse{}$ algorithm needs $\ell$ steps performed by a single left-to-right pass.
In the $k^{\text{th}}$ step ($1\le k\le \ell-1$) it chooses the factor $f_k$ as the shortest prefix of the unprocessed suffix $f_k\cdots f_\ell$ such that $f_k\neq f_i$ for all $i<k$. If there is no such prefix, then the end of $w$ is reached and the algorithm sets $f_\ell$ to the (possibly empty) unprocessed suffix of $w$.

The factorization $f_1,\dots, f_\ell$ yields an SLP for $w$ of size at most $3\ell$ as described in the following example:
\begin{example}
The \lzse{}-factorization of $w=aabaaababababaa$ is
$a$, $ab$, $aa$, $aba$, $b$, $abab$, $aa$ and leads to an SLP with the following rules:
\begin{itemize}
\item $S\to F_1F_2F_3F_4F_5F_6F_3$
\item $F_1\to a,\;F_2\to F_1b,\;F_3\to F_1a,\;F_4\to F_2a,\;F_5\to b,\;F_6\to F_4b$
\end{itemize}
We have a nonterminal $F_i$ for each factor $f_i$ ($1\le i\le 6$) such that $\val_{\bb A}(F_i)=f_i$.
The last factor $aa$ is represented in the start rule by the nonterminal $F_3$.
\end{example}
The \lzse{}-factorization of $a^n$ ($n>0$) is $a^1,a^2,\dots,a^{m}, a^k$, where $k\in\{0,\dots,m\}$ such that $n=k+\sum_{i=1}^m i$.
Note that $m\in\Theta(\sqrt{n})$ and thus $\alpha_\lzse{}(1,n)\in\Theta(\sqrt{n}/\log n)$.
The following bounds for the worst-case approximation ratio of \lzse{} were shown in \cite[Thm.~3 and 4]{CLLLPPSS05}:
\begin{eqnarray}
\alpha_\lzse{}(2,n)  & \in & \Omega(n^{2/3}/\log n)  \label{lz78-lower} \\
\alpha_\lzse{}(n)  & \in & \bigO((n/\log n)^{2/3}) \label{lz78-upper}
\end{eqnarray}
We will improve the lower bound  so that it matches the upper bound in~\eqref{lz78-upper}.

\begin{theorem}\label{lz-lower-bound-bounded-alphabet}
For every $k,2 \leq k \leq n$ we have $\alpha_\lzse{}(k,n)\in\Theta((n/\log n)^{2/3})$.
\end{theorem}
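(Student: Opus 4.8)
The plan is to mirror the structure of the proof of Theorem~\ref{bisection-lower-bound}: establish the lower bound $\alpha_\lzse{}(3,n) \in \Omega((n/\log n)^{2/3})$ for a ternary alphabet by exhibiting an explicit family of words, and then transfer it to the binary case via the injective homomorphism $f:\{0,1,a\}^* \to \{0,1\}^*$ with $f(0)=00$, $f(1)=01$, $f(a)=10$. The upper bound half is already given for free by~\eqref{lz78-upper}, which yields $\alpha_\lzse{}(k,n)\in\bigO((n/\log n)^{2/3})$ for all $k$, so the entire content is the matching lower bound.

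For the ternary construction I would take a word built so that the \lzse{}-factorization is forced to discover roughly $\Theta(2^{2k}/k)$ many short but pairwise distinct factors, while a smallest SLP remains of size $\bigO(k)$. The natural candidate, adapting the \bisection{} construction and the $n^{2/3}$-lower-bound idea from~\cite{CLLLPPSS05}, uses blocks of the form $\bin_k(j)$ separated by padding blocks $a^{m}$ of a carefully chosen length $m$, arranged so that the greedy left-to-right parse of \lzse{} is repeatedly thrown ``out of phase'' with respect to the positions of the binary codewords: after the algorithm has consumed a prefix of slightly varying length, the next factor it must invent contains a differently-shifted copy of some $\bin_k(j)$ padded by $a$'s on both sides. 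Because \lzse{} only appends one symbol at a time to a previously seen factor, if the word is designed so that the factor lengths must grow like $\ell$ at step $\ell$ (as already happens for $a^n$) and simultaneously each factor is forced to be a distinct shifted codeword-in-padding string of length $\Theta(k)$, one gets $\Theta(2^k/k)$ distinct factors each of length $\Theta(k)$... I would instead push the factor count up to $\Theta((n/\log n)^{2/3})$, which on a word of length $n$ with $k\in\Theta(\log n)$ means $\Theta(n^{2/3}/k^{2/3})$ factors. Concretely, I would concatenate $\Theta(2^{k}/k)$ many ``rounds'', where round $t$ is of the form (padded copies of all $k$ codewords)$\cdot a^{t}$ with the extra $a^t$ shifting the phase, so that across all rounds \lzse{} must store all $\Theta(2^k/k)$ shifted versions of each of the $k$ codewords, giving $\ell \in \Theta(2^k/k \cdot k)=\Theta(2^k)$ factors; tuning the padding length $m_k$ and the number of rounds against $n$ then delivers the claimed $n^{2/3}$-type bound. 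The argument that a smallest SLP has size $\bigO(k)$ is the same as in Claim~\ref{claim-bisec2}: build $a^{m_k}$ by point~\ref{logn} of Lemma~\ref{lemma:folklore}, build the ``skeleton'' word over a ternary alphabet (codewords with $a^{m_k}$ replaced by a fresh symbol $x$, all the $a^t$ shifts replaced analogously) using point~\ref{nlogn}, which gives $\bigO(k\log k/\log(k\log k))=\bigO(k)$, and then substitute back the powers of $a$ via points~\ref{substring}, \ref{logn}, \ref{concat}.

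The main obstacle, and the part requiring genuine care, is the exact analysis of the \lzse{}-factorization of the proposed word: one must verify rigorously that the greedy parse really does invent the intended $\Theta(2^k)$ distinct short factors and does not ``collapse'' by reusing factors across rounds or by discovering long factors that swallow whole codewords (which would reduce $\ell$). This is a delicate inductive bookkeeping over the left-to-right pass, tracking at each step exactly which prefix of the unprocessed suffix is the shortest not-yet-seen factor; getting the padding lengths and the phase shifts right so that the parse behaves as claimed is where almost all the work lies. Once the factorization is pinned down, computing $n=|s_k|$ in terms of $k$, converting $\ell \in \Omega(2^k)$ and $g(s_k) \in \bigO(k)$ into $\alpha_\lzse{}(s_k) \in \Omega(2^k/k)$ and then into $\Omega((n/\log n)^{2/3})$ using $k \in \Theta(\log n)$ is routine arithmetic. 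The binary transfer is then immediate and identical to the end of the proof of Theorem~\ref{bisection-lower-bound}: $g(f(s_k)) \le 2\,g(s_k)$, while applying \lzse{} to $f(s_k)$ produces at least $\Omega(\ell)$ distinct factors because $f$ is injective and maps the forced distinct factors of $s_k$ to distinct factors of $f(s_k)$ (up to checking that the parse of $f(s_k)$ cannot be dramatically shorter — again an injectivity-plus-length argument), so the approximation ratio is preserved up to constants and $\alpha_\lzse{}(2,n) \in \Omega((n/\log n)^{2/3})$ follows.
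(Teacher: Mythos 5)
Your reduction of the problem to the lower bound via \eqref{lz78-upper} is fine, and you have correctly located the heart of the matter in the exact bookkeeping of the \lzse{} parse, but the construction you sketch cannot yield the claimed bound, and the ternary-to-binary transfer step does not work for \lzse{} the way it works for \bisection{}.

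First, the parameters are inconsistent. If, as in the \bisection{}-style construction, the forced factors are shifted codewords padded to length $\Theta(m_k)=\Theta(2^k/k)$ and there are $\ell\in\Theta(2^k)$ of them with $g(s_k)\in\bigO(k)$, then $n\in\Theta(4^k/k)$ and $\alpha\in\Omega(2^k/k)=\Omega(\sqrt{n/k})=\Omega(\sqrt{n/\log n})$ --- exactly the \bisection{} bound, not $(n/\log n)^{2/3}$. If instead you shorten the factors to length $\Theta(k)$ as your later paragraph suggests, then point~\ref{mk} of Lemma~\ref{lemma:folklore} caps the number of distinct factors of length at most $L$ at $\bigO(g\cdot L^2)$; with $g\in\bigO(k)$ and $L\in\bigO(k)$ you get $\ell\in\bigO(k^3)$, nowhere near $2^k$. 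In other words, there is no way to reconcile ``$\Theta(2^k)$ distinct factors, length $\Theta(k)$, $g=\bigO(k)$'' with Lemma~\ref{lemma:folklore}. The reason the paper's proof works is a different parameter regime: it uses two parameters $m$ and $k$ with $m=\lceil\log k\rceil$, builds a binary word $s_{m,k}$ whose \lzse{} parse has $\Theta(k^2m)$ factors of the form $a^ib^pa^j$ (lengths ranging up to $\Theta(k)$), and shows $g(s_{m,k})\in\bigO(\log k+m)=\bigO(\log k)$, not $\bigO(k)$. Here $k$ ends up being $\Theta\bigl((n/\log n)^{1/3}\bigr)$, not $\Theta(\log n)$; your insistence on $k\in\Theta(\log n)$ is inherited from the \bisection{} proof and is the wrong scaling for \lzse{}.

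Second, the transfer via the homomorphism $f$ is problematic for \lzse{}. Unlike \bisection{}, the \lzse{} parse of $f(s_k)$ has no reason to respect the boundaries of the parse of $s_k$: the greedy left-to-right pass on $f(s_k)$ may split inside $f$-images of symbols, and the resulting factors need not be images under $f$ of factors of $s_k$. ``Injectivity plus length'' does not give you a lower bound on the number of factors of the parse of $f(s_k)$. The paper sidesteps this entirely by building $s_{m,k}$ directly over $\{a,b\}$, so no transfer lemma is needed. If you want to salvage the ternary route, you would have to prove a dedicated lemma relating the \lzse{} parse of $w$ to that of $f(w)$, which is a nontrivial task in its own right and is not done in the paper.
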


\begin{proof}
Due to \eqref{lz78-upper} it suffices to show $\alpha_\lzse{}(2,n)\in\Omega((n/\log n)^{2/3})$.
For  $k \geq 2, m \geq 1$, let
$u_{m,k} = ((a^k b^{(2m+1)} a)^m (a^k b^{(m+1)})^2)^k a^k$ and
$v_{m,k}=\left(\prod_{i=1}^m b^ia^k\right)^{k^2}$.
We now analyze the approximation ratio of \lzse{} on the words
\[s_{m,k} = a^{\frac{k(k+1)}{2}} b^{m(2m+1)}\;u_{m,k}\;v_{m,k}.\]

For example we have
$u_{2,4}=((a^4b^5a)^2(a^4b^3)^2)^4a^4$,
$v_{2,4}=(ba^4b^2a^4)^{16}$ and
$s_{2,4}=a^{10}\;b^{10}\; u_{2,4}\; v_{2,4}$.

\begin{claim}\label{claim:lzse-size}
  The SLP produced by $\lzse{}$ on input $s_{m,k}$ has size $\Theta(k^2 m)$.
\end{claim}
\begin{claimproof}
We consider the \lzse{}-factorization $f_1,\dots,f_\ell$ of $s_{m,k}$. Example~\ref{example-lz78} gives a complete example.
The prefix $a^{k(k+1)/2}$ produces the factors $f_i=a^i$ for every $i,1\le i \le k$
and the substring $b^{m(2m+1)}$ produces the factors $f_{k+i}=b^i$ for every $i,1\le i \le 2m$.

We  next show that the substring $u_{m,k}$ then produces all factors from
\begin{eqnarray}
  \bigcup_{j=1}^k \left(\{ a^{k-j+1} b^{i+1}, b^{2m-i}a^j \mid 0 \le i \le m-1\}\cup\{a^{k-j+1}b^{m+1},a^k b^{m+1}a^j \}\right).\label{eqn:ukm_factors}
\end{eqnarray}
Let
\begin{eqnarray*}
  u_{m,k,j} &=& a^{k-j+1}b^{2m+1}a(a^kb^{2m+1}a)^{m-1} (a^kb^{m+1})^2 a^j\\
  &=& (a^{k-j+1}b^{2m+1}a^j)^m a^{k-j+1}b^{m+1} a^{k} b^{m+1} a^j.
\end{eqnarray*}
Then, $u_{m,k} = u_{m,k,1} \cdots u_{m,k,k}$.
We show that each $u_{m,k,j}$ produces the factors from
\begin{eqnarray}
  \{ a^{k-j+1} b^{i+1}, b^{2m-i}a^j \mid 0 \le i \le m-1\}\cup\{a^{k-j+1}b^{m+1},a^k b^{m+1}a^j\}, \label{eqn:ukmj_factors}
\end{eqnarray}
for each $j,1\le j \le k$,
thus obtaining \eqref{eqn:ukm_factors}.

Consider
\begin{eqnarray}
  u_{m,k,1} &=& (a^kb^{2m+1}a)^ma^kb^{m+1}a^kb^{m+1}a.
\end{eqnarray}
From the factorization of the prefix $a^{\frac{k(k+1)}{2}} b^{m(2m+1)}$ of $s_{m,k}$,
the first $a^k b^{2m+1}a$ is factorized into $a^kb$ and $b^{2m}a$.
Next, for each of the following $a^k b^{2m+1}a$, we can see that the new factors are
$a^kb^2$ and $b^{2m-1}a$, $a^kb^3$ and $b^{2m-2}a, \ldots, a^k b^{m}$ and $b^{m+1}a$.
Finally, the remaining $a^kb^{m+1}a^kb^{m+1}a$ is factorized to $a^k b^{m+1}$ and  $a^k b^{m+1} a$.
Therefore, (\ref{eqn:ukmj_factors}) gives the factors of $u_{m,k,j}$ for $j = 1$.

Next, suppose that $u_{m,k,j'}$ produces the factors shown in (\ref{eqn:ukmj_factors}) for all
$1 < j' < j$, and consider $u_{m,k,j}$.
By the induction hypothesis, we see that $a^{k-j+1}b$ and $b^{2m}a^j$ are the first two factors.
Similarly, we see that each of the following $a^{k-j+1}b^{2m+1}a^j$ is factorized to
$a^{k-j+1}b^2$ and $b^{2m-1}a^j$,
$a^{k-j+1}b^3$ and $b^{2m-2}a^j, \ldots, a^{k-j+1}b^m$ and $b^{m+1}a^j$.
Finally, the remaining suffix $a^{k-j+1}b^{m+1}a^kb^{m+1}a^j$ is factorized to
$a^{k-j+1}b^{m+1}$ and $a^kb^{m+1}a^j$.
It follows that the factorization of $u_{m,k}$ yields the factors shown in
(\ref{eqn:ukm_factors}).

Next, we will show that the remaining suffix $v_{m,k}$ of $s_{m,k}$ produces the set of factors
\[
\left\{a^ib^pa^j\mid 0\le i \le k-1,\;1\le j\le k,\;1\le p\le m\right\}.
\]
Observe that from the factors produced so far, only the factors $a^{j} b^{i}$ for
$0 \leq j \leq k, 0 \leq i \leq m$
can be used for the factorization of $v_{m,k}$. The reason for this is that all other
factors contain an occurrence of $b^{m+1}$, which does not occur in $v_{m,k}$.

Let $x=k+2m+k(2m+2)$ and note that this is the number of factors that we have produced so far.
The factorization of $v_{m,k}$ in $s_{m,k}$ slightly differs when $m$ is even, resp.,~odd.
We now assume that $m$ is even and explain the difference to the other case afterwards.
The first factor of $v_{m,k}$ in $s_{m,k}$ is $f_{x+1}=ba$.
We already have produced the factors $a^{k-1}b^i$ for every $i,1\le i\le m$, and hence
$f_{x+i}=a^{k-1}b^ia$ for every $i,2\le i\le m$ and $f_{x+m+1}=a^{k-1}ba$.
The next $m$ factors are $f_{x+m+i}=a^{k-1}b^ia^2$ if $i$ is even, $f_{x+m+i}=a^{k-2}b^ia$ if $i$ is odd ($2\le i\le m$) and $f_{x+2m+1}=a^{k-2}ba$.
This pattern  continues: The next $m$ factors
are $f_{x+2m+i}=a^{k-1}b^ia^3$ if $i$ is even, $f_{x+2m+i}=a^{k-3}b^ia$ if $i$ is odd ($2\le i\le m$) and $f_{x+3m+1}=a^{k-3}ba$
and so on. 
Hence, we get the following sets of factors for $(\prod_{i=1}^m b^ia^k)^{k}$:
\begin{enumerate}[(i)]
\item $\{a^{k-i}b^pa\mid 1\le i\le k,\;1\le p\le m,\;p \text{ is odd}\}$ for $f_{x+1},f_{x+3}\dots,f_{x+km-1}$
\item $\{a^{k-1}b^pa^j\mid 1\le j\le k,\;1\le p\le m,\;p \text{ is even}\}$ for $f_{x+2},f_{x+4},\dots, f_{x+km}$
\end{enumerate}
The remaining word then starts with the factor $f_{y+1}=ba^2$, where $y = x+km$.
Now the former pattern can be adapted to the next $k$ repetitions of $\prod_{i=1}^m b^ia^k$ which gives us 
\begin{enumerate}[(i)]
\item\label{odd} $\{a^{k-i}b^pa^2\mid 1\le i\le k,\;1\le p\le m,\;p \text{ is odd}\}$ for $f_{y+1},f_{y+3}\dots,f_{y+km-1}$
\item\label{even} $\{a^{k-2}b^pa^j\mid 1\le j\le k,\;1\le p\le m,\;p \text{ is even}\}$ for $f_{y+2},f_{y+4},\dots, f_{y+km}$
\end{enumerate}
The iteration of this process then reveals the whole pattern and thus yields the claimed factorization of $v_{m,k}$ in $s_{m,k}$ into factors $a^ib^pa^j$ for every $i,0 \le i \le k-1$, $j,1\le j\le k$ and $p,1\le p\le m$.
If $m$ is odd then the patterns in~(\ref{odd}) and~(\ref{even}) switch after each occurrence of $\prod_{i=1}^m b^ia^k$, which does not affect the result but makes the pattern slightly more complicated. But the case that $m$ is even suffices in order to derive the lower bound from the theorem.

We conclude that there are exactly $k+2m+k(2m+2)+k^2m$ factors (ignoring $f_\ell=\varepsilon$) and hence the SLP produced by $\lzse{}$ on input $s_{m,k}$ has size $\Theta(k^2m)$.
\hspace*{\fill}  \mbox{({\em end proof of Claim~\ref{claim:lzse-size}})}
\end{claimproof}
\begin{claim}\label{claim:lzse-slp}
A smallest SLP producing $s_{m,k}$ has size $\bigO(\log k+m)$.
\end{claim}
\begin{claimproof}
We will combine the points stated in Lemma~\ref{lemma:folklore} to prove this claim.
Points~\ref{logn} and~\ref{concat} yield an SLP of size $\bigO(\log k+\log m)$ for the prefix $a^{k(k+1)/2}\;b^{m(2m+1)}\;u_{m,k}$ of $s_{m,k}$.
To bound the size of an SLP for $v_{m,k}$ note at first that there is an SLP of size $\bigO(\log k)$ producing $a^k$ by point~\ref{logn} of Lemma~\ref{lemma:folklore}.
Applying point~\ref{substring} and again point~\ref{logn}, it follows that there is an SLP of size $\bigO(\log k)+g(v_{m,k}')$ producing $v_{m,k}$, where $v_{m,k}'=\prod_{i=1}^m b^ix$ for some fresh letter $x$.
To get a small SLP for $v_{m,k}'$, we can introduce $m$ nonterminals $B_1,\dots,B_m$ producing $b^1,\dots,b^m$ by adding rules
$B_1\to b$ and $B_{i+1}\to B_ib$ ($1\le i\le m-1$).
This is enough to get an SLP of size $\bigO(m)$ for $v_{m,k}'$ and therefore an SLP of size $\bigO(\log k+m)$ for $v_{m,k}$.
Together with our first observation and point~\ref{concat} of Lemma~\ref{lemma:folklore} this yields an SLP of size $\bigO(\log k+m)$ for $s_{m,k}$.
\hspace*{\fill}  \mbox{({\em end proof of Claim~\ref{claim:lzse-slp}})}
\end{claimproof}
Claims \ref{claim:lzse-size} and \ref{claim:lzse-slp} imply $\alpha_\lzse{}(s_{m,k})\in \Omega(k^2m /(\log k+m))$. Let us now fix $m=\lceil\log k\rceil$.
We get $\alpha_\lzse{}(s_{m,k})\in\Omega(k^2)$. Moreover, for the length $n = |s_{m,k}|$ of $s_{m,k}$ we have
$n\in\Theta(k^3m+k^2m^2) = \Theta(k^3 \log k)$. We get
 $\alpha_\lzse{}(s_{m,k})\in\Omega((n/\log k)^{2/3})$ which together with $\log n\in\Theta(\log k)$ finishes the proof.
\end{proof}

\begin{example}  \label{example-lz78}
Here is the complete \lzse{} factorization of
\[ s_{2,4}=a^{10}b^{10}\underbrace{((a^4b^5a)^2(a^4b^3)^2)^4a^4}_{u_{2,4}}\underbrace{(ba^4b^2a^4)^{16}}_{v_{2,4}} . \]
Factors of $a^{10}$: $\;a,\;a^2,\;a^3,\;a^4$

\smallskip
\noindent
Factors of $b^{10}$: $\;b,\;b^2,\;b^3,\;b^4$

\smallskip
\noindent
Factors of $u_{2,4}$:
\[
\begin{array}{llllll}
a^4b & b^4a & a^4b^2 & b^3a & a^4b^3 & a^4b^3a \\
a^3b & b^4a^2 & a^3b^2 & b^3a^2 & a^3b^3 & a^4b^3a^2 \\
a^2b & b^4a^3 & a^2b^2 & b^3a^3 & a^2b^3 & a^4b^3a^3 \\
ab & b^4a^4 & ab^2 & b^3a^4 & ab^3 & a^4b^3a^4
\end{array}
\]
Factors of $v_{2,4}$:
\[
\begin{array}{ll}
ba & a^3b^2a \\
a^3ba & a^3b^2a^2 \\
a^2ba & a^3b^2a^3 \\
aba & a^3b^2a^4 \\
ba^2 & a^2b^2a \\
a^3ba^2 & a^2b^2a^2 \\
a^2ba^2 & a^2b^2a^3 \\
aba^2 & a^2b^2a^4 \\
ba^3 & ab^2a \\
a^3ba^3 & ab^2a^2 \\
a^2ba^3 & ab^2a^3 \\
aba^3 & ab^2a^4 \\
ba^4 & b^2a \\
a^3ba^4 & b^2a^2 \\
a^2ba^4 & b^2a^3 \\
aba^4 & b^2a^4
\end{array}
\]
\end{example}



\subsection{\repair{}}
For a given SLP $\bb A = (N,\Sigma, P, S)$, a word $\gamma \in (N \cup \Sigma)^+$ is called a \emph{maximal string} of $\bb A$ if
\begin{itemize}
\item $|\gamma|\ge 2$,
\item $\gamma$ appears at least twice without overlap in the right-hand sides of $\bb A$,
\item and no strictly longer word appears at least as many times on the right-hand sides of $\bb A$ without overlap.
\end{itemize}
A \emph{global grammar-based compressor} starts on input $w$ with the trivial SLP $\bb A=(\{S\},\Sigma, \{S\to w\}, S)$.
In each round, the algorithm selects a maximal string $\gamma$ of $\bb A$
and updates $\bb A$ by replacing a largest set of pairwise non-overlapping occurrences of $\gamma$ in $\bb A$ by a fresh nonterminal $X$.
Additionally, the algorithm introduces the rule $X\to \gamma$.
The algorithm stops when no maximal string occurs.
The global grammar-based compressor \repair{}~\cite{DBLP:conf/dcc/LarssonM99} selects in each round a most frequent maximal string.
Note that the replacement is not unique, e.g. the word $a^5$ with the maximal string $\gamma=aa$ yields SLPs with rules $S\to XXa, X\to aa$ or $S\to XaX, X\to aa$ or $S\to aXX, X\to aa$.
We assume the first variant in this paper, i.e. maximal strings are replaced from left to right.

The above description of \repair{} is taken from~\cite{CLLLPPSS05}. In most papers on \repair{} the algorithm works slightly different: It replaces
in each step a digram (a string of length two) with the maximal number of pairwise non-overlapping occurrences in the right-hand sides.
For example, for the string $w = abcabc$ this produces the SLP $S \to BB$, $B \to Ac$, $A \to ab$, whereas the
\repair{}-variant from \cite{CLLLPPSS05} produces the smaller SLP $S \to AA$, $A \to abc$.

The following lower and upper bounds on the approximation ratio of \repair{} were shown in~\cite{CLLLPPSS05}:
\begin{eqnarray}
\alpha_\mathsf{RePair}(n)\in\Omega\left(\sqrt{\log n}\right) \label{repair-lower}\\
\alpha_\mathsf{RePair}(2,n)\in \bigO\left((n/\log n)^{2/3}\right) \notag
\end{eqnarray}
The proof of the lower bound \eqref{repair-lower} assumes an alphabet of unbounded size.
To be more accurate, the authors construct  for every $k$ a word $w_k$ of length $\Theta(\sqrt{k} 2^k)$ over an alphabet of size
$\Theta(k)$ such that $g(w) \in \bigO(k)$ and \repair{} produces a grammar of size $\Omega(k^{3/2})$ for $w_k$.
We will improve this lower bound using only a binary alphabet. To do so, we first need to know how \repair{} compresses unary words.
\begin{example}[unary inputs]
\label{unary}
\repair{} produces on input $a^{27}$ the SLP with rules $X_1\to aa$, $X_2\to X_1X_1$, $X_3\to X_2X_2$ and $S\to X_3X_3X_3X_1a$, where $S$ is the start nonterminal. For the input $a^{22}$ only the start rule $S\to X_3X_3X_2X_1$ is different.
\end{example}

In general, \repair{} creates on unary input $a^m$ ($m\ge 4$) the rules $X_1\to aa$, $X_i\to X_{i-1}X_{i-1}$ for $2\le i\le \lfloor \log m\rfloor-1$ and a start rule, which is strongly related to the binary representation of $m$ since each nonterminal $X_i$ produces the word $a^{2^i}$. To be more accurate, let
$b_{\lfloor \log m\rfloor} b_{\lfloor \log m\rfloor-1}\cdots b_1b_0$
be the binary representation of $m$ and define the mappings $f_i$ ($i \geq 0$) by:
\begin{itemize}
\item $f_0:\{0,1\}\to\{a,\varepsilon\}$ with $f_0(1)=a$ and $f_0(0)=\varepsilon$,\label{f0}
\item $f_i:\{0,1\}\to \{X_i,\varepsilon\}$ with $f_i(1)=X_i$ and $f_i(0)=\varepsilon$ for $i\ge 1$. 
\end{itemize}
Then the start rule produced by \repair{} on input $a^m$ is
\begin{center}
$S\to X_{\lfloor\log m\rfloor-1}X_{\lfloor\log m\rfloor-1}f_{\lfloor\log m\rfloor-1}(b_{\lfloor\log m\rfloor-1})\cdots f_1(b_1)f_0(b_0)$.
\end{center}
This means that the symbol $a$ only occurs in the start rule if $b_0=1$, and the nonterminal $X_i$ ($1\le i\le \lfloor\log m\rfloor-2$) occurs in the start rule if and only if
$b_i=1$. Since \repair{} only replaces words with at least two occurrences, the most significant bit $b_{\lfloor \log m\rfloor}=1$ is represented by $X_{\lfloor\log m\rfloor-1}X_{\lfloor\log m\rfloor-1}$.
Note that for $1 \leq m \leq 3$, \repair{} produces the trivial SLP $S \to a^m$.

For the proof of the new lower bound, we use \emph{De Bruijn sequences} \cite{deBr46}. A binary De Bruijn sequence of order $n$ is a string $B_n \in \{0,1\}^*$ of length $2^n$
such that every string from $\{0,1\}^n$ is either a factor of $B_n$ or a suffix of $B_n$ concatenated with a prefix of $B_n$. Moreover, every word of length at least $n$ occurs at most once as factor in $B_n$.
As an example, the string $1100$ is a De Bruijn sequence of order $2$, since $11$,
$10$ and $00$ occur as factors and $01$ occurs as a suffix concatenated with a prefix.
Lemma~\ref{lemma:folklore} (point \ref{mk}) implies that every SLP for $B_n$ has size $\Omega(2^n/n)$.

\begin{theorem} \label{thm}
$\alpha_\mathsf{RePair}(2,n)\in\Omega\left(\log n/\log\log n\right)$
\end{theorem}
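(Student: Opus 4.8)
The plan is to construct, for every sufficiently large parameter $k$, a \emph{binary} word $w_k$ whose length $n=n_k$ satisfies $\log n_k=\Theta(k)$, such that $g(w_k)=\Theta(2^k/k)$ while \repair{} produces an SLP of size $\Omega(2^k/\log k)$. Since then $\log\log n_k=\Theta(\log k)$, the approximation ratio on $w_k$ is $\Omega(k/\log k)=\Omega(\log n/\log\log n)$, which proves the theorem along the sequence $(n_k)_k$ (intermediate lengths are handled by the usual padding). The word $w_k$ encodes a binary De Bruijn sequence $B_k$ of order $k$: each symbol of $B_k$ is replaced by a long run of $a$'s (of length $\Theta(2^k)$) followed by a single $b$, using two distinct run lengths $p\neq q$ for the two bit values. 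The role of the runs is to \emph{force} \repair{}'s early behaviour, and the role of $B_k$ is that, by Lemma~\ref{lemma:folklore}(\ref{mk}), its factor complexity cannot be compressed below $\Omega(2^k/k)$.

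\smallskip
\noindent\emph{Step 1: $g(w_k)=\Theta(2^k/k)$ (routine).} For the upper bound I would assemble a small SLP from Lemma~\ref{lemma:folklore}: point~(\ref{nlogn}) gives an SLP of size $\mathcal O(2^k/k)$ for $B_k$; point~(\ref{logn}) gives SLPs of size $\mathcal O(k)$ for the two runs $a^p$, $a^q$; point~(\ref{substring}) plugs the runs into the skeleton and point~(\ref{concat}) glues the resulting pieces, yielding $g(w_k)=\mathcal O(2^k/k+k)=\mathcal O(2^k/k)$. For the matching lower bound, note that $|w_k|=\Theta(2^{2k})$ and that a factor of $w_k$ of length $\Theta(k\cdot 2^k)$ straddles $\Omega(k)$ consecutive runs, hence determines $\Omega(k)$ consecutive symbols of $B_k$, which pin it down; so almost all of those factors are pairwise distinct, $w_k$ has $\Omega(2^{2k})$ distinct factors of that length, and Lemma~\ref{lemma:folklore}(\ref{mk}) gives $g(w_k)=\Omega\big(2^{2k}/(k\cdot 2^k)\big)=\Omega(2^k/k)$.

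\smallskip
\noindent\emph{Step 2: \repair{} wastes a $\Theta(k/\log k)$ factor.} This is the heart of the proof. First I would show that on input $w_k$ the digrams occurring inside the long $a$-runs dominate the digram counts in the first rounds, so that \repair{} is forced to carry out exactly the unary ``doubling'' construction of Example~\ref{unary}: it creates $X_1\to aa$, $X_i\to X_{i-1}X_{i-1}$ up to level $\Theta(k)$, and thereby commits each $a$-run irrevocably to a fixed expansion over the $X_i$ dictated by the binary representation of $p$ resp.\ $q$ (including the competition with the boundary digrams $X_i b$, $b X_i$, which the choice of $p$ and $q$ must be tuned against). After this forced phase the start rule has become a word $\rho(B_k)$ over the alphabet $\{X_1,\dots,X_{\Theta(k)},b\}$ that still carries the De Bruijn structure of $B_k$. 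Second — the crux — I would analyse the remaining run of \repair{} on $\rho(B_k)$ and argue that a purely greedy, most-frequent-digram compressor that is locked into the rigid encoding it has just committed to cannot recover the global sharing that keeps $g(\rho(B_k))=\mathcal O(2^k/k)$ small, and must leave $\Omega(2^k/\log k)$ symbols in its final grammar; Lemma~\ref{lemma:folklore}(\ref{mk}) applied to $\rho(B_k)$ supplies the hard floor that greedy cannot undercut by more than a $\Theta(\log k)$ factor.

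\smallskip
\noindent\emph{Main obstacle.} The difficulty lies entirely in the second part of Step~2. \repair{} is a \emph{global} compressor, and, as the introduction stresses, its behaviour is hard to pin down: one must track how the digram frequencies evolve round by round, control \repair{}'s left-to-right tie-breaking, check that nothing derails the forced initial doubling phase, and — most delicately — prove the quantitatively sharp claim that \repair{} on $\rho(B_k)$ stays stuck at $\Omega(2^k/\log k)$ rather than reaching the optimum $\Theta(2^k/k)$. Engineering the two run lengths so that all three requirements hold at once — a small smallest SLP, a genuinely forced doubling phase, and a surviving De Bruijn barrier in \repair{}'s output — is where the real work is.
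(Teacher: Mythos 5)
Your high-level plan (small optimal SLP, force \repair{}\ into a unary ``doubling'' phase, then use the De Bruijn structure to lower-bound what \repair{}\ is left with) matches the spirit of the paper's proof, but your concrete construction diverges substantially, and the divergence is exactly where your argument has a genuine gap.

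The paper does \emph{not} encode a De Bruijn sequence of order $k$ by runs of two fixed lengths $p,q$. Instead it takes a De Bruijn sequence $B_{\lceil\log k\rceil}$ of order $\Theta(\log k)$, forms from it a binary word $w_k$ of length $2k$, and builds $s_k=\prod_{i=1}^{k-1}(a^{w_k[1:k+i]}b)\,a^{w_k}$ where the $a$-block lengths are the integers with binary representations $w_k[1:k+1], w_k[1:k+2],\dots$ --- so the blocks grow \emph{exponentially}, each obtained from the previous by appending one bit. The whole point of this choice is what happens after \repair{}'s first $k-1$ (forced) rounds: each block $a^{w_k[1:k+i]}$ collapses to $X_{k-1}^{\cdots}$ followed by a ``tail'' $v_i$ over $\{a,X_1,\dots,X_{k-2}\}$ that spells out the $k-1$ low-order bits of $w_k[1:k+i]$, i.e.\ a \emph{sliding window} $w_k[i+2:k+i]$ of the De Bruijn-derived word. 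These $k$ windows together carry $\Theta(k^2)$ pairwise distinct factors of length $\Theta(\log k)$, so Lemma~\ref{lemma:folklore}(\ref{mk}) gives a lower bound of $\Omega(k^2/\log k)$ on \emph{any} SLP for the current start rule --- and hence on \repair{}'s final output, with no further round-by-round analysis needed.

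Your construction with two fixed run lengths cannot reproduce this. After the doubling phase your start rule is $\rho(B_k)$, a word of length $\Theta(2^k)$ over a constant-size alphabet, and $\rho$ is just a fixed-length re-coding of $B_k$. Applying Lemma~\ref{lemma:folklore}(\ref{mk}) to $\rho(B_k)$ yields only $g(\rho(B_k))=\Omega(2^k/k)$, which is the \emph{same} order as your $g(w_k)=\Theta(2^k/k)$; the factor-count lemma therefore gives you ratio $\Omega(1)$, not $\Omega(k/\log k)$. To get $\Omega(2^k/\log k)$ you would have to prove that \repair{}\ specifically stays a $\Theta(k/\log k)$ factor above the optimum on $\rho(B_k)$ --- i.e.\ you would have to control a global greedy compressor for many more rounds beyond the initial doubling phase, which is precisely the kind of analysis that is out of reach (and that the paper's exponentially-growing-block design is engineered to avoid). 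You acknowledge this as the ``main obstacle,'' but there is currently no idea in the proposal that closes it, and the factor-count lemma, as you invoke it, does not supply the missing quantitative edge.

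In short: Step~1 and the forced-doubling part of Step~2 are plausible, but the crux of Step~2 --- the claimed $\Omega(2^k/\log k)$ lower bound on \repair{}'s output --- is unsupported and, as stated, not derivable from Lemma~\ref{lemma:folklore}(\ref{mk}) alone. The paper circumvents the problem by using $k$ \emph{distinct, exponentially spaced} block lengths (drawn from the prefixes of a single length-$2k$ De Bruijn-derived word), so that the post-doubling start rule already contains $\Theta(k^2)$ distinct $\Theta(\log k)$-length factors, making the lower bound immediate.
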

\begin{proof}
We start with a binary De Bruijn sequence $B_{\lceil\log k\rceil}\in \{0,1\}^*$ of length $2^{\lceil\log k\rceil}$.
We have $k\le|B_{\lceil\log k\rceil}|< 2k$. Since De Bruijn sequences are not unique, we fix a De Bruijn sequence which starts with $1$ for the remaining proof.
We define a homomorphism $h:\{0,1\}^*\to\{0,1\}^*$ by $h(0)=01$ and $h(1)=10$. The words $w_k$ of length $2k$ are defined as
$$w_k=h(B_{\lceil\log k\rceil}[1:k]).$$
For example $k=4$ and $B_2=1100$ yield $w_4=10100101$.
We will analyze the approximation ratio of \repair{} for the binary words
$$s_k=\prod_{i=1}^{k-1}\left(a^{w_k[1:k+i]}b\right)a^{w_k}=a^{w_k[1:k+1]}ba^{w_k[1:k+2]}b\dots a^{w_k[1:2k-1]}ba^{w_k},$$
where the prefixes $w_k[1:k+i]$ for $1\le i\le k$ are interpreted as integers given by their binary representations. For example we have $s_4=a^{20}ba^{41}ba^{82}ba^{165}$.

Since $B_{\lceil\log k\rceil}[1]=w_k[1]=1$, we have $2^{k+i-1}\le\left|a^{w_k[1:k+i]}\right|\le 2^{k+i}-1$ for $1\le i\le k$ and thus $|s_k|\in \Theta\left(4^k\right)$.
\begin{claim} \label{claim-repair1}
  A smallest SLP producing $s_k$ has size $\bigO(k)$.
\end{claim}
\begin{claimproof}
There is an SLP $\bb A$ of size $\bigO(k)$ for the first $a$-block $a^{w_k[1:k+1]}$ of length $\Theta(2^k)$. 
Let $A$ be the start nonterminal of $\bb A$.
For the second $a$-block $a^{w_k[1:k+2]}$ we only need one additional rule: If $w_k[k+2]=0$, then we can produce $a^{w_k[1:k+2]}$ by the fresh nonterminal $B$ using the rule $B\to AA$. Otherwise, if $w_k[k+2]=1$, then we use $B\to AAa$. The iteration of that process yields for each $a$-block only one additional rule of size at most $3$. If we replace the $a$-blocks in $s_k$ by nonterminals as described, then the resulting word has size $2k+1$ and hence $g(s_k)\in \bigO(k)$.
\hspace*{\fill}  \mbox{({\em end proof of Claim~\ref{claim-repair1}})}
\end{claimproof}
\begin{claim} \label{claim-repair2}
  The SLP produced by \repair{} on input $s_k$ has size $\Omega(k^2/\log k)$.
\end{claim}
\begin{claimproof}
On unary inputs of length $m$, the start rule produced by \repair{} is strongly related to the binary encoding of $m$ as described above. 
On input $s_k$, the algorithm begins to produce a start rule which is similarly related to the binary words $w_k[1:k+i]$ for $1\le i\le k$.
Consider the SLP $\mathbb{G}$ which is produced by \repair{} after $(k-1)$ rounds on input $s_k$. We claim that up to this point \repair{} is not affected by the $b$'s in $s_k$ and therefore has introduced the rules $X_1\to aa$ and $X_i\to X_{i-1}X_{i-1}$ for $2\le i\le k-1$. If this is true, then the first $a$-block is modified in the start rule after $k-1$ rounds as follows
\begin{center}
$S\to X_{k-1}X_{k-1}f_{k-1}(w_k[2])f_{k-2}(w_k[3])\cdots f_0(w_k[k+1])b\cdots$
\end{center}
where $f_0(1)=a$, $f_0(0)=\varepsilon$ and $f_i(1)=X_i$, $f_i(0)=\varepsilon$ for $i\ge 1$.
All other $a$-blocks are longer than the first one, hence each factor of the start rule which corresponds to an $a$-block begins with $X_{k-1}X_{k-1}$. Therefore, the number of occurrences of $X_{k-1}X_{k-1}$ in the SLP is at least $k$. 
Since the symbol $b$ occurs only $k-1$ times in $s_k$, it follows that our assumption is correct and \repair{} is not affected by the $b$'s in the first $(k-1)$ rounds on input $s_k$.
Also, for each block $a^{w_k[1:k+i]}$, the $k-1$ least significant bits of $w_k[1:k+i]$ ($1\le i\le k$) are represented in the corresponding factor of the start rule of $\mathbb{G}$,
i.e., the start rule contains non-overlapping factors $v_i$ with
\begin{equation}
v_i=f_{k-2}(w_k[i+2])f_{k-3}(w_k[i+3])\dots f_1(w_k[k+i-1])f_0(w_k[k+i])\label{blockencoding}
\end{equation}
for $1\le i\le k$. For example after $3$ rounds on input $s_4=a^{20}ba^{41}ba^{82}ba^{165}$, we have the start rule
$$S\to \underbrace{X_3X_3X_2}_{a^{20}}b\underbrace{{X_3}^5a}_{a^{41}}b\underbrace{{X_3}^{10}X_1}_{a^{82}}b\underbrace{{X_3}^{20}X_2a}_{a^{165}},$$
where $v_1=X_2$, $v_2=a$, $v_3=X_1$ and $v_4=X_2a$.
The length of the factor $v_i\in\{a,X_1,\dots,X_{k-2}\}^*$ from equation~\eqref{blockencoding} is exactly the number of $1$'s in the word $w_k[i+2:k+i]$. Since $w_k$ is constructed by the homomorphism $h$, it is easy to see that $|v_i|\ge (k-3)/2$.
Note that no letter occurs more than once in $v_i$, hence $g(v_i)=|v_i|$. Further, each substring of length $2\lceil\log k\rceil+2$ occurs at most once in $v_1,\dots,v_k$, because otherwise there would be a factor of length $\lceil\log k\rceil$ occurring more than once in $B_{\lceil\log k\rceil}$. It follows that there are at least
$$k\cdot ( \lceil(k-3)/2\rceil-2\lceil\log k\rceil-1)\in\Theta(k^2)$$
different factors of length $2\lceil\log k\rceil+2\in\Theta(\log k)$ in the right-hand side of the start rule of $\mathbb G$. By Lemma~\ref{lemma:folklore} (point \ref{mk}) it follows that a smallest SLP for the right-hand side of the start rule has size $\Omega(k^2/\log k)$ and therefore $|\mathsf{RePair}(s_k)|\in\Omega(k^2/\log k)$.
\hspace*{\fill}  \mbox{({\em end proof of Claim~\ref{claim-repair2}})}
\end{claimproof}
In conclusion: We showed that a smallest SLP for $s_k$ has size $\bigO(k)$, while
\repair{} produces an SLP of size $\Omega(k^2/\log k)$.
This implies $\alpha_{\mathsf{RePair}}(s_k) \in \Omega(k/\log k)$, which together with $n=|s_k|$ and $k\in\Theta(\log n)$ finishes the proof.
\end{proof}
Note that in the above proof, \repair{} chooses in the first $k-1$ rounds a digram for the replaced maximal string.
Therefore, Theorem~\ref{thm} also holds for the \repair{}-variant, where in every round a digram (which is not necessarily
a maximal string) is replaced.

The goal of this section is to prove the following result:

\begin{theorem} \label{theorem-6}
Let $c \geq 1$ be a constant.
If there exists a polynomial time grammar-based compressor $\mc C$ with
$\alpha_{\mc C}(2,n) \leq c$ then there exists a
polynomial time grammar-based compressor $\mc D$ with
$\alpha_{\mc D}(n) \leq 6c$.
\end{theorem}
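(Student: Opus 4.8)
The idea is to encode an arbitrary alphabet $[1,k]$ into $\{0,1\}$ by the homomorphism $\varphi$ with $\varphi(i) = a^i b$ (writing $a = 0$, $b = 1$), and then run the assumed compressor $\mathcal C$ on the encoded word. Given an input word $w \in [1,k]^*$ of length $n$, we compute $\varphi(w)$, apply $\mathcal C$ to get an SLP $\mathcal C(\varphi(w))$ for $\varphi(w)$, and then decode this SLP back into an SLP $\mathcal D(w)$ for $w$. The compressor $\mathcal D$ we construct will run in polynomial time because $|\varphi(w)| \le (k+1)n \le n^2 + n$ is polynomial in $n$, $\mathcal C$ is polynomial, and the decoding step is a straightforward linear-time transformation of the SLP. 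The whole argument then reduces to two size estimates: (1) $g(\varphi(w))$ is not much larger than $g(w)$, and (2) from an SLP for $\varphi(w)$ we can recover an SLP for $w$ without blowing up the size too much.

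\textbf{Step 1: bounding $g(\varphi(w))$.} Take a smallest SLP $\mathbb A$ for $w$, of size $g(w)$. I would replace each terminal occurrence of the symbol $i \in [1,k]$ in the right-hand sides of $\mathbb A$ by a reference to a nonterminal $A_i$ producing $a^i b$, and add rules producing each needed block $a^i b$. The blocks $a^i b$ for the (at most $\sigma \le k$) symbols actually occurring in $w$ can be produced jointly cheaply: build $a^1, a^2, \dots$ incrementally via rules $C_1 \to a$, $C_{j+1} \to C_j a$, giving all powers $a^{j}$ up to the largest needed exponent using $\mathcal O(k)$ total size, then set $A_i \to C_i b$. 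A cleaner route, avoiding a $+\mathcal O(k)$ additive term that could dominate when $g(w)$ is small, is to note $g(w) \ge \sigma - O(1)$ anyway (each symbol must appear), or simply to observe that we only need the blocks $A_i$ for symbols $i$ that occur in $w$, and chain them as above so that the added material has size $\mathcal O(g(w))$. The factor here is small — something like $2$ or $3$ — coming from the fact that each original right-hand side symbol becomes one nonterminal reference plus the amortized cost of the block rules. So $g(\varphi(w)) \le c_1 \cdot g(w)$ for a small constant $c_1$.

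\textbf{Step 2: decoding an SLP for $\varphi(w)$ into an SLP for $w$.} This is the step I expect to be the main obstacle. Given any SLP $\mathbb B$ with $\val(\mathbb B) = \varphi(w)$, I want an SLP $\mathbb B'$ with $\val(\mathbb B') = w$ and $|\mathbb B'| \le c_2 |\mathbb B|$. The difficulty is that $\mathbb B$'s nonterminals may derive strings that cut across the block boundaries of $\varphi(w)$ — a nonterminal might derive something like $a^3 b\, a^2 b\, a^1$, a prefix-closed-on-the-right but not block-aligned chunk. The standard fix is to track, for each nonterminal $X$ of $\mathbb B$, how $\val(X)$ sits inside $\varphi(w)$: it has the shape $a^{p}\, (\text{a block-aligned middle } \varphi(v_X))\, a^{q} b^{\epsilon}$ or some such bounded-overhang normal form. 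One replaces $X$ by a constant number of new nonterminals recording the middle part $v_X$ (a factor of $w$) together with the short boundary fragments, and rewrites each rule of $\mathbb B$ accordingly; because the decoding map $a^i b \mapsto i$ is injective and "synchronizing" (the letter $b$ marks block ends), the overhang at each split point is bounded and each original rule of size two spawns $\mathcal O(1)$ new rules. Summing over all rules gives $|\mathbb B'| = \mathcal O(|\mathbb B|)$ with an explicit small constant.

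\textbf{Putting it together.} Combining, $|\mathcal D(w)| = |\mathbb B'| \le c_2 |\mathcal C(\varphi(w))| \le c_2 \cdot c \cdot g(\varphi(w)) \le c_2 c\, c_1 \cdot g(w)$, so $\alpha_{\mathcal D}(n) \le c_1 c_2 \cdot c$, and the arithmetic of Steps 1 and 2 must be done carefully enough to get $c_1 c_2 = 6$. I would expect the bookkeeping to split as a factor of about $2$ from the encoding (Step 1) and about $3$ from the decoding normal form (Step 2), or some similar split; the core of the write-up is making the block-overhang case analysis in Step 2 precise and checking that the constants genuinely multiply out to $6$ rather than something larger. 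The polynomial-time claim for $\mathcal D$ is then immediate since every component — computing $\varphi(w)$, running $\mathcal C$, and the local SLP rewriting — is polynomial.
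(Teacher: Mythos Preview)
Your proposal is correct and follows essentially the same approach as the paper: encode via $\varphi(c_i)=a^i b$, apply $\mathcal C$ to $\varphi(w)$, and decode the resulting SLP by tracking, for each nonterminal $X$, the decomposition of $\val(X)$ into a left boundary symbol, a block-aligned middle (which becomes the new nonterminal's value over $\Sigma$), and a right $a$-overhang. One small correction to your expectations: the paper's constants split the other way around---the encoding step (your Step~1) costs a factor of $3$ (the incremental rules $A_0\to b$, $A_i\to aA_{i-1}$ add $2k-1\le 2g(w)$ to the size), while the decoding step (your Step~2) costs only a factor of $2$ (each nonterminal occurrence in a right-hand side contributes at most one extra boundary symbol $\ell_i$, and all the $\mathbb Z_k$-markers get absorbed by the rewriting rules).
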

For a factor $24+\varepsilon$ (with $\varepsilon > 0$) instead of 6 this result was shown in \cite{ArpeR06} using a more complicated block encoding.

We split the proof of Theorem~\ref{theorem-6} into two lemmas that state translations
between SLPs over arbitrary alphabets
and SLPs over a binary alphabet. For the rest of this section fix the alphabets
$\Sigma = \{ c_0, \ldots, c_{k-1} \}$ and $\Sigma_2 = \{ a, b \}$. To translate
between these two alphabets, we define an injective homomorphism $\varphi \colon \Sigma^* \to \Sigma_2^*$ by
\begin{equation} \label{home-varphi}
\varphi(c_i) = a^ib \quad (0 \leq i \leq k-1).
\end{equation}

\begin{lemma} \label{lemma-factor-3}
Let $w \in \Sigma^*$ be such that every symbol from $\Sigma$
occurs in $w$. From an SLP $\bb A$ for $w$ one can construct in polynomial time an
SLP $\bb B$ for $\varphi(w)$ of size at most $3 \cdot |\bb A|$.
\end{lemma}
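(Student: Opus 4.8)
The plan is to transform the given SLP $\bb A = (N, \Sigma, P, S)$ for $w$ into an SLP $\bb B$ for $\varphi(w)$ by working bottom-up through the nonterminals and replacing each terminal occurrence in a right-hand side by an encoding of $\varphi$ of that symbol, but doing so in a way that reuses nonterminals to keep the size blow-up bounded by a factor of $3$. First I would introduce a single new nonterminal $A$ together with the rule $A \to aa$ — no, rather I would aim for a small "gadget" SLP for the powers $a^i$ that are needed. The key observation is that $\varphi(c_i) = a^i b$, so a naive substitution of $c_i$ by $a^i b$ would blow up the size by roughly $k$ in the worst case, which is far too much. Instead, since every symbol of $\Sigma$ occurs in $w$, the SLP $\bb A$ must have size $|\bb A| \geq k$ (each $c_i$ occurs on some right-hand side), and more importantly $|\bb A| \geq$ (number of distinct terminals) $= k$; this gives us room to "pay" for encoding the alphabet once globally rather than once per occurrence.

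Concretely, the first step is to build, using a fresh chain of nonterminals, an SLP fragment that produces each of $a^0 b, a^1 b, \ldots, a^{k-1} b$ with total size $\bigO(k)$: take nonterminals $D_0 \to b$ and $D_{i} \to a D_{i-1}$ for $1 \le i \le k-1$, so $\val(D_i) = a^i b = \varphi(c_i)$, and this fragment has size $\bigO(k) \le \bigO(|\bb A|)$. The second step is to take the productions of $\bb A$ and, in every right-hand side, replace each occurrence of the terminal $c_i$ by the nonterminal $D_i$, leaving the nonterminal occurrences untouched; this replacement does not change the length of any right-hand side, so the modified productions of $\bb A$ have total size exactly $|\bb A|$. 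Setting $\bb B$ to be the union of the $D$-fragment and these modified productions, with start nonterminal $S$, we get $\val(\bb B) = \varphi(w)$ and $|\bb B| = |\bb A| + \bigO(k)$. The remaining issue is to pin down the constant: we need $|\bb B| \le 3|\bb A|$, so we must check that the $D$-chain can be built with total size at most $2|\bb A|$. Since each $D_i$ for $i \ge 1$ contributes a right-hand side of length $2$ and $D_0$ contributes length $1$, the chain has size $2(k-1) + 1 = 2k - 1$; combined with the $|\bb A|$ from the modified productions this gives $|\bb B| \le |\bb A| + 2k - 1 \le 3|\bb A|$, using $k \le |\bb A|$ (which holds because all $k$ symbols occur in $w$ and hence appear in the right-hand sides of $\bb A$, forcing $|\bb A| \ge k$ — more carefully, $|\bb A| \ge k$ because the multiset of symbols on right-hand sides must contain each $c_i$ at least once, actually one needs $|\bb A|\ge k$, which follows since the $k$ terminals each occur, so the total length of right-hand sides is at least $k$).

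The main obstacle I expect is the bookkeeping around the constant $3$ and the edge cases: if $k = 1$ the bound $2k - 1 = 1 \le 2|\bb A|$ is fine, and one must make sure the inequality $2k - 1 \le 2|\bb A|$ really follows from every symbol occurring (it does, since then $|\bb A| \ge k$, giving $2k - 1 < 2k \le 2|\bb A|$). A secondary point to handle carefully is the acyclicity of $\bb B$: the $D_i$ form a chain $D_{k-1} \to \cdots \to D_0$ with no reference to $N$, and the modified productions only add edges from old nonterminals to the $D_i$, so no cycle is created. Everything else — that $\bb B$ is constructible in polynomial time, that $\val_{\bb B}(S) = \varphi(\val_{\bb A}(S))$ by a straightforward induction on the structure of $\bb A$ using that $\varphi$ is a homomorphism and $\val_{\bb B}(D_i) = \varphi(c_i)$ — is routine.
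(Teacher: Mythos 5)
Your construction is identical to the paper's: introduce a chain $D_0 \to b$, $D_i \to aD_{i-1}$ (the paper calls them $A_i$) of total size $2k-1$, replace each terminal occurrence $c_i$ by $D_i$, and bound the result by $3|\bb A|$ using $k \le |\bb A|$ since every symbol occurs. The reasoning and the accounting match the paper exactly.
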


\begin{proof}
To translate $\bb A$ into an SLP $\bb B$ for $\varphi(w)$, we first
add the productions $A_0 \to b$ and $A_i \to aA_{i-1}$ for every $i,1 \leq i \leq k-1$.
Finally, we replace in $\bb A$ every
occurrence of $c_i \in \Sigma$ by $A_i$. This yields an SLP $\bb B$ for
$\varphi(w)$ of size $|\bb A| + 2k-1$.  Because $k \le |\bb A|$
(since every symbol from $\Sigma$ occurs in $w$), we
obtain $|\bb B| \le 3 \cdot |\bb A|$.
\end{proof}

\begin{lemma} \label{lemma-factor-2}
Let $w \in \Sigma^*$ such that every symbol from $\Sigma$
occurs in $w$. From an SLP $\bb B$ for $\varphi(w)$ one can construct in polynomial time an
SLP $\bb A$ for $w$ of size at most $2 \cdot |\bb B|$.
\end{lemma}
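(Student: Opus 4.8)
The goal is the reverse translation of Lemma \ref{lemma-factor-3}: given an SLP $\bb B$ for $\varphi(w)$, produce an SLP $\bb A$ for $w$ with $|\bb A| \le 2|\bb B|$. The plan is to go through the rules of $\bb B$ one at a time and, for each nonterminal $A$ of $\bb B$, replace $\val_{\bb B}(A)$ by a string over $N \cup \Sigma$ that decodes back to the $\varphi$-preimage. The subtlety is that an arbitrary factor of $\varphi(w)$ need not itself be a $\varphi$-image: it may begin in the middle of a block $a^i b$ (a "dangling" prefix $a^j$ with $j \le i$) and may end in the middle of a block (a dangling suffix $a^j$, i.e. some $a$'s not yet terminated by a $b$). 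So for each nonterminal $A$ I would track the invariant that $\val_{\bb B}(A) = a^{p} \cdot \varphi(u) \cdot a^{q}$ for a unique word $u \in \Sigma^*$ and unique $p,q \ge 0$; this decomposition is forced because the only way $\varphi$-decoding can fail is at the two ends. (Here one also needs every $A$ to satisfy this — terminals $a$ give $p=1,q=0,u=\varepsilon$; terminal $b$ gives $p=0,q=0,u=c_0$.)

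First I would process the rules of $\bb B$ in topological (bottom-up) order. For a rule $A \to \alpha$ with $\alpha = Z_1 Z_2 \cdots Z_r$ (each $Z_j$ a nonterminal or terminal of $\bb B$, already decomposed as $a^{p_j}\varphi(u_j)a^{q_j}$), I concatenate the decompositions: the dangling suffix $a^{q_j}$ of $Z_j$ merges with the dangling prefix $a^{p_{j+1}}$ of $Z_{j+1}$, and whenever this merged run of $a$'s is immediately followed (inside some later $Z$) by a $b$, that run together with the $b$ closes off a symbol $c_i$. Concretely, the new right-hand side for $A$ in $\bb A$ will be obtained from $\alpha$ by replacing each $Z_j$ (if it is a nonterminal) by the corresponding nonterminal $A_{Z_j}$ of $\bb A$ producing $u_j$, each terminal $a$ by nothing (it is absorbed into a dangling run), each terminal $b$ by the appropriate symbol $c_i \in \Sigma$ where $i$ is the length of the dangling $a$-run accumulated just before it — but to keep this a legal SLP I must not expand runs, so instead I keep, at the boundary between consecutive symbols of the right-hand side, a count of pending $a$'s and only emit a $\Sigma$-symbol $c_i$ (i.e. one symbol of $\bb A$'s right-hand side) when a $b$ or the boundary of a decoded block arrives. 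The key point is that each original symbol of $\alpha$ contributes at most one symbol to the new right-hand side, except that when a $b$ lands after a dangling run we may need an extra symbol — but that $b$ itself is then the only contribution of that position, so the amortized cost is still bounded. Carrying this out carefully gives $|\text{rhs}_{\bb A}(A)| \le 2\,|\text{rhs}_{\bb B}(A)|$ for every rule, hence $|\bb A| \le 2|\bb B|$; and the start nonterminal $S$ of $\bb B$ has $\val_{\bb B}(S) = \varphi(w)$, which has $p=q=0$, so $\val_{\bb A}(S) = w$.

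The main obstacle will be making the bookkeeping at rule boundaries precise without blowing up the size. The naive decoding "scan $\val_{\bb B}(A)$ and emit preimage" is not available inside an SLP; one must decode purely locally, rule by rule, and the dangling prefixes/suffixes $a^{p_j}$ are only represented implicitly (as a power, not written out). So the real content is: (i) proving the decomposition $a^{p}\varphi(u)a^{q}$ is well-defined and unique for every nonterminal, by induction on the rule DAG; (ii) showing the lengths $p,q$ and the decoded preimage $u$ can be computed and stored in polynomial time; and (iii) the amortized counting argument that each position of each right-hand side of $\bb B$ costs at most $2$ positions in $\bb A$. Once the invariant and the local recombination rule are stated cleanly, the $\le 2|\bb B|$ bound and the polynomial-time claim are routine. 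I would also double-check the edge cases where a nonterminal's value is entirely a run of $a$'s (so $u = \varepsilon$) or where $r=1$, to be sure the construction does not introduce unit productions that violate the SLP definition (if it does, a trivial contraction fixes it at no cost).
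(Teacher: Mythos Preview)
Your overall strategy---process the rules of $\bb B$ bottom-up, attach to each nonterminal the ``dangling'' boundary information, and recombine locally---is exactly the paper's approach. But the specific decomposition you propose, $\val_{\bb B}(A) = a^{p}\,\varphi(u)\,a^{q}$, does not work as stated.

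First, it is not unique: e.g.\ $\val(A)=ab$ can be written as $a^{0}\varphi(c_1)a^{0}$ or as $a^{1}\varphi(c_0)a^{0}$, so the claim that the decomposition is ``forced'' is false.

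Second and more importantly, no choice of convention lets the recombination go through. Suppose you take $p=0$ whenever $\val(A)$ contains a $b$ (the natural minimal choice), so that $u=c_{i_1}c_{i_2}\cdots c_{i_n}$ and your new nonterminal $A_Z$ in $\bb A$ produces this entire $u$. Now consider a right-hand side $\cdots Z'\,Z\cdots$ where $Z'$ leaves a pending $a$-run of length $d>0$. Those $d$ $a$'s must merge with the first block of $\val(Z)$ to yield $c_{d+i_1}$, not $c_{i_1}$---but $A_Z$ is a fixed nonterminal producing a fixed string beginning with $c_{i_1}$, and you cannot alter its first symbol from the outside. (Your text handles this correctly for a \emph{terminal} $b$, where you emit $c_i$ with $i$ the pending count; the problem is precisely the nonterminal case.)

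The fix, and this is the paper's construction, is to peel off one more piece on the left: store for each $A$ a triple $(\ell(A),m(A),r(A))$ where $\ell(A)\in\Sigma\cup\{\varepsilon\}$ is the \emph{first} decoded symbol (kept explicit so the left context can shift it), $m(A)\in\Sigma^*$ is the stable middle part (this is what the new nonterminal $A'$ produces), and $r(A)\in\mathbb{Z}_k$ is the length of the trailing $a$-run. With this three-part split the local recombination is straightforward, and each nonterminal occurrence in a right-hand side of $\bb B$ contributes at most two symbols ($\ell_j$ and $A'_j$) while each terminal contributes at most one; summing gives $|\bb A|\le 2|\bb B|$. Your size intuition and the polynomial-time claim are fine once the decomposition is corrected.
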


\begin{proof}
Let $\bb B=(N,\Sigma_2,P,S)$ be an SLP for $\varphi(w)$, where $w \in \Sigma^*$. 
We can assume that every right-hand side of $\bb B$ is a non-empty string.
Consider a nonterminal $A \in N$ of $\bb B$.
Since $\bb B$ produces $\varphi(w)$, $A$ produces a factor of $\varphi(w)$, which is a word from $\{a,b\}^*$.
We cannot directly translate $\val(A)$ back to a word over $\Sigma^*$ because $\val(A)$ does not have to belong
to the image of $\varphi$. But $\val(A)$ is a factor of a string from $\varphi(\Sigma^*)$.
Note that a string over $\{a,b\}$ is a factor of a string from $\varphi(\Sigma^*)$ if and only if it does not contain
a factor $a^i$ with $i \geq k$.
Let $\val(A) = a^{i_1}b   \cdots a^{i_n} b a^{i_{n+1}}$ be such a string, where $n \geq 0$, and $0 \leq i_1, \ldots, i_{n+1} < k$.
We factorize $\val(A)$ into three parts in the following way.
If $n=0$ (i.e., $\val(A) = a^{i_1}$) then we split $\val(A)$ into $\varepsilon$, $\varepsilon$, and $a^{i_1}$.
If $n >0$ then we split  $\val(A)$ into $a^{i_1}b$, $a^{i_2}b   \cdots a^{i_n} b$, and $a^{i_{n+1}}$.
Let us explain the intuition behind this factorization. We concentrate on the case
$n>0$; the case $n=0$ is simpler. Note that irrespective of the context in which an occurrence of $\val(A)$ appears
in $\val(\bb B)$, we can translate the middle part $a^{i_2}b   \cdots a^{i_n} b$ into $c_{i_2} \cdots c_{i_n}$.
We will therefore introduce in the SLP $\bb A$ for $w$ a variable $A'$ that produces $c_{i_2} \cdots c_{i_n}$.
For the left part $a^{i_1}b$ we can not directly produce $c_{i_1}$ because an occurrence of $\val(A)$ could be preceded by
an $a$-block $a^{i_0}$, yielding the symbol $c_{i_0+i_1}$. Therefore, the algorithm that produces $\bb A$ will only
memorize the symbol $c_{i_1}$ without writing it directly on a right-hand side of an $\bb A$-production. Similarly, the algorithm
will memorize the length $i_{n+1}$ of the final $a$-block of $\val(A)$.

Let us now come to the formal details of the proof.
As usual, we write $\mathbb{Z}_k$ for $\{0,1,\ldots,k-1\}$ and w.l.o.g. we assume that $\Sigma \cap \mathbb{Z}_k = \emptyset$.
Consider a word  $s = a^{i_1}b   \cdots a^{i_n} b a^{i_{n+1}}$, where $n \geq 0$, and $0 \leq i_1, \ldots, i_{n+1} < k$.
Motivated by the above discussion, we define $\ell(s)\in \Sigma \cup \{ \varepsilon \}$, $m(s) \in \Sigma^*$ and $r(s)\in \mathbb{Z}_k$
 as follows:
\begin{align*}
\ell(s) & =  \begin{cases}
     c_{i_1} & \text{ if } n \geq 1, \\
     \varepsilon & \text{ if } n=0,
     \end{cases}
\\
m(s) &= c_{i_2} \cdots c_{i_n}, \\
r(s) &= i_{n+1} .
\end{align*}
 Note that $\ell(s)=\varepsilon$ implies $m(s)= \varepsilon$.
 Finally, we define the word $\psi(s) \in \Sigma^* \mathbb{Z}_k$ as
\[
  \psi(s) = \ell(s) m(s) r(s) .
\]
For a nonterminal $A \in N$ we define $\ell(A) = \ell(\val(A))$, $m(A) = m(\val(A))$ and
$r(A) =  r(\val(A))$. We now define an SLP $\bb A'$ that contains
for every nonterminal $A \in N$ a nonterminal $A'$ such that $\val(A') = m(A)$.
Moreover, the algorithm also computes  $\ell(A) \in \Sigma \cup \{ \varepsilon \}$ and $r(A) \in \mathbb{Z}_k$.

We define the productions of $\bb A'$ inductively over the structure of $\bb B$.
Consider a production $(A \to \alpha) \in P$, where
$\alpha =  v_0 A_1 v_1 A_2 \cdots v_{n-1} A_n v_n \neq \varepsilon$
with $n \geq 0$, $A_1, \ldots, A_n \in N$, and $v_0, v_1, \ldots, v_n \in \Sigma_2^*$.
Let $\ell_i = \ell(A_i) \in \Sigma \cup \{\varepsilon\}$ and $r_i = r(A_i) \in \mathbb{Z}_k$, which have already been computed.
The right-hand side for $A'$ is obtained as follows.
We start with the word
\begin{equation} \label{rhs-factor-2}
\psi(v_0) \,  \ell_1 \, A'_1 \, r_1 \, \psi(v_1) \, \ell_2 \, A'_2 \, r_2 \cdots \psi(v_{n-1}) \, \ell_n \, A'_n \, r_n \, \psi(v_n) .
\end{equation}
Note that each of the factors $\ell_i A'_i r_i$ produces (by induction) $\psi(\val(A_i))$.
Next we remove every $A'_i$ that derives the empty word (which is equivalent to $m(A_i) = \varepsilon$).
After this step, every occurrence of a symbol $\rsym{i} \in \mathbb{Z}_k$ in \eqref{rhs-factor-2} is either the last symbol of the above word or it is
followed by a symbol from $\mathbb{Z}_k \cup \Sigma$ (but not followed by a nonterminal $A'_j$). To see this, recall
that $\ell_j = \varepsilon$ implies $m(A_j) = \varepsilon$, in which case $A'_j$ is removed in \eqref{rhs-factor-2}.

The above fact allows us
to eliminate all occurrences of symbols $\rsym{i} \in \mathbb{Z}_k$ in \eqref{rhs-factor-2} except for the last one using the two reduction
rules $\rsym{i}\, \rsym{j} \to \rsym{i+j}$ for $i,j \in \mathbb{Z}_k$ (which corresponds to $a^i a^j = a^{i+j}$)
and $\rsym{i}\, c_j \to c_{i+j}$ (which corresponds to $a^i a^jb = a^{i+j}b$).
If we perform these rules as long as possible (the order of applications is not relevant
since these rules form a confluent and terminating system), only a single occurrence
of a symbol $\rsym{i} \in \mathbb{Z}_k$ at the end of the string will remain.
The resulting string $\alpha'$ produces $\psi(A)$.
Hence, we obtain the right-hand side for the nonterminal $A'$ by removing the first symbol of $\alpha'$ if it is
from $\Sigma$ (this symbol is then $\ell(A)$) and the last symbol of $\alpha'$,
which must be from $\mathbb{Z}_k$ (this symbol is $r(A)$).
Note that if $\alpha'$ does not start with a symbol from $\Sigma$, then
$\alpha'$ belongs to $\mathbb{Z}_k$, in which case we have $\ell(A) = \varepsilon$.

Note that $\psi(\varphi(w)) = w\rsym{0}$ for every $w \in \Sigma^*$,
so for the start variable $S$ of $\bb B$ we must have $r(S) = \rsym{0}$,
since $\val_{\bb B}(S) \in \varphi(\Sigma^*)$.
Let $S' \to \sigma$ be the production for $S'$ in $\bb A'$.
We obtain the SLP $\bb A$ by replacing this production by $S' \to \ell(S) \sigma$.
Since $\val_{\bb A'}(S')=m(S)$ and $\val_{\bb B}(S) = \varphi(w)$ we have $\val_{\bb A}(S') = \ell(S) m(S) = w$.

To bound the size of $\bb A$ consider the word in \eqref{rhs-factor-2} from which the right-hand side of
the nonterminal $A'$ is computed. All occurrences of symbols from $\mathbb{Z}_k$ are eliminated when
forming this right-hand side. This leaves a word of length at most $|\alpha|+n$ (where $\alpha$ is the original
right-hand side of the nonterminal $A$). The additive term $n$ comes from the symbols $\ell_1, \ldots, \ell_n$.
Hence, $|\bb A'|$ is bounded by the size of $\bb B$ plus the total number of
occurrences of nonterminals in right-hand sides of $\bb B$, which is at most $2 |\bb B| -1$ (there is at least one terminal
occurrence in a right-hand side). Since $|\bb A| = |\bb A'|+1$ we get $|\bb A| \leq 2 |\bb B|$.

The algorithm's runtime for a production $A \to \alpha$ is linear in $|\alpha|$.
This is because we start with the string \eqref{rhs-factor-2} which can be computed in time $\bigO(|\alpha|)$.
From this string, we remove all the $A_i'$ that produce $\varepsilon$ and we also apply the two rewriting rules.
Both of these can be done in a single left-to-right sweep over the string.
The number of operations needed is linear in $|\alpha|$,
where each operation needs constant time,
i.e. removing an $A_i'$ takes constant time, and using one of the rewriting rules also takes constant time.
Since the algorithm uses the structure of $\bb B$ to visit each of its productions once,
we overall obtain a linear running time in the size of $\bb B$.
\end{proof}

\begin{example}
Consider the production  $A \to a^3 b a^5 A_1 a^3 A_2 a^2 b^2 A_3 a^2$
and assume that $\val(A_1) = a^2$, $\val(A_2) = a b a^3 b a$ and $\val(A_3) = b a^2b a^3$.
Hence, when we produce the right-hand side for $A'$ we have:
$\val(A'_1) = \varepsilon$, $\val(A'_2) = c_3$, $\val(A'_3) = c_2$, $\ell_1 = \varepsilon$, $r_1 = \rsym{2}$,
$\ell_2 = c_1$, $r_2 = \rsym{1}$, $\ell_3 = c_0$, $r_3 = \rsym{3}$.
We start with the word (every digit is a single symbol)
\[
c_3 \, 5\, A'_1 \, 2 \, 3 \, c_1 A'_2 \,1 \, c_2 c_0 \, 0 \,  c_0 A'_3 \, 3\, 2 .
\]
Then we replace $A'_1$ by $\varepsilon$ and obtain
$c_3 \, 5\, 2 \, 3 \, c_1 A'_2 \,1 \, c_2 c_0 \, 0 \,  c_0 A'_3 \, 3\, 2$.
Applying the reduction rules finally yields
$c_3 c_{11} A'_2 c_3 c_0 c_0 A'_3 \rsym{5}$. Hence, we have $\ell(A) = c_3$, $r(A) = \rsym{5}$ and the production for $A'$ is
$A' \to c_{11} A'_2 c_3 c_0 c_0 A'_3$.
\end{example}
{\it Proof of Theorem~\ref{theorem-6}.} Let $\mc C$ be an arbitrary grammar-based compressor working
in polynomial time such that $\alpha_{\mc C}(2,n) \leq c$. The grammar-based compressor $\mc D$ works for an input word $w$ over an arbitrary
alphabet as follows: Let $\Sigma = \{c_0, \ldots, c_{k-1}\}$ be the set of symbols that occur in $w$ and let $\varphi$
be defined as in \eqref{home-varphi}. Using $\mc C$, one first computes an SLP $\bb B$ for $\varphi(w)$ such that $|\bb B| \leq c \cdot g(\varphi(w))$.
Then, using Lemma~\ref{lemma-factor-2}, one computes from $\bb B$ an SLP $\bb A$ for $w$ such that
$|\bb A| \leq 2c \cdot g(\varphi(w))$.
Lemma~\ref{lemma-factor-3} implies $g(\varphi(w)) \leq 3 \cdot g(w)$ and hence
$|\bb A| \leq 6c \cdot g(w)$, which proves the theorem.
\qed

\section{Hardness of grammar-based compression for binary alphabets}

\section{Open problems}

One should try to narrow the gaps between the lower and
 upper bounds for the other grammar-based compressors analyzed in \cite{CLLLPPSS05}.
 In particular, the gap between the known lower and upper bounds for the so-called global algorithms from \cite{CLLLPPSS05} (like \repair{}) is still quite big.
Charikar et al.~\cite{CLLLPPSS05} prove an upper bound $\Theta( (n/\log n)^{2/3})$ for every global algorithm and nothing better is known for the three
global algorithms \repair{}, {\sf LongestMatch}, {\sf Greedy} studied in \cite{CLLLPPSS05}. Comparing to this upper bound, the known lower bounds are quite
small: $\Omega(\log n / \log\log n)$ for \repair{} (by our Theorem~\ref{thm}), $\Omega(\log\log n)$ for longest match \cite{CLLLPPSS05}, and $1.348\ldots$. The latter is
a very recent result from \cite{Hu19}.\footnote{The table on page 2556 in \cite{CLLLPPSS05} states the better lower bound of $1.37\ldots$, but the authors only show the lower
bound $1.137\ldots$, see \cite[Theorem 11]{CLLLPPSS05}.}

Another open research problem is improving the constant 6 in Theorem~\ref{theorem-6}. Recall that lowering this constant to at most
$8569/8568$ would imply that the smallest grammar problem for binary strings cannot be solved in polynomial time unless $\mathsf{P} = \mathsf{NP}$.


\end{document}